%
\RequirePackage{amsmath}
\documentclass[runningheads]{llncs}
\usepackage[utf8]{inputenx}
\usepackage[T1]{fontenc}
%
\usepackage{graphicx}
%
%
\usepackage{hyperref}
\usepackage{multicol,multirow}
\usepackage{amsfonts}
\usepackage{amsmath}
\usepackage{algorithm}
\usepackage{xcolor, soul}

\usepackage[noend]{algpseudocode}
\usepackage{tikz}
\usepackage{algpseudocode}
\usepackage{float}
\usetikzlibrary{matrix, shapes, arrows, positioning, chains, calc}
\usetikzlibrary{fit}
\usepackage{ulem}
\usepackage{adjustbox}

\usepackage{enumitem}
\usepackage{bigdelim}
\usepackage{caption} 
\captionsetup[table]{skip=10pt}
\usepackage{subcaption}
\usepackage{bm}
\usepackage{verbatim}
\usepackage{orcidlink}

\usetikzlibrary{shapes, arrows, positioning}

\usepackage[paragraphs, wordspacing, tracking, floats, mathspacing, subtle]{savetrees} 

\newcommand{\ring}[2]{{\mathbb{#1}}_{#2}} 
\newcommand{\field}[2]{\mathbb{#1}_{#2}}
\newcommand{\group}[2]{\mathbb{#1}_{#2}^{*}}
\newcommand{\fieldvec}[3]{\mathbb{#1}_{#2}^{#3}}
\newcommand{\fieldmat}[4]{\mathbb{#1}_{#2}^{#3\times #4}}

\newcommand{\scaler}[1]{#1}
\newcommand{\set}[1]{#1}
\newcommand{\vect}[1]{\bm{#1}}
\newcommand{\vecentry}[2]{\bm{#1}[#2]}
\newcommand{\subvect}[2]{\bm{#1}[\bm{#2}]}
\newcommand{\basisvec}[1]{\bm{e}_{#1}}

\newcommand{\mat}[1]{\bm{#1}}
\newcommand{\matentry}[3]{\bm{#1}[#2,\ #3]}
\newcommand{\matcol}[2]{\bm{#1}[*,\ {#2}]}
\newcommand{\matrow}[2]{\bm{#1}[{#2},\ *]}
\newcommand{\submatcol}[2]{\bm{#1}[*,\ {#2}]}
\newcommand{\submatrow}[2]{\bm{#1}[{#2},\ *]}

\newcommand{\code}[1]{\mathcal{#1}}

\newcommand{\Hatt}[1]{\widehat{#1}}

\newcommand{\Exp}[1]{\mathbb{E}\left[#1\right]}

\setlength{\belowcaptionskip}{-10pt} 

\begin{document}
\newif\ifsubmission
\submissionfalse
%
\title{ZKFault: Fault attack analysis on zero-knowledge based post-quantum digital signature schemes }
\titlerunning{ZKFault}
%
%
\ifsubmission
\else
\author{Puja Mondal\inst{1}\orcidlink{0009-0006-7300-8435} \and
Supriya Adhikary\inst{1}\orcidlink{0000-0002-0701-8049} \and
Suparna Kundu\inst{2}\orcidlink{0000-0003-4354-852X} \and\ 
Angshuman Karmakar\inst{1,2}\orcidlink{0000-0003-2594-588X}}
\authorrunning{Mondal et al.}
%
\institute{Department of Computer Science and Engineering, IIT Kanpur, India \\ \email{\{pujamondal,adhikarys,angshuman\}@cse.iitk.ac.in}\and
COSIC, KU Leuven, Kasteelpark Arenberg 10, Bus 2452, B-3001 Leuven-Heverlee, Belgium\\
\email{\{suparna.kundu\}@esat.kuleuven.be}
}

\fi

\maketitle              
\begin{abstract}
Computationally hard problems based on coding theory, such as the syndrome decoding problem, have been used for constructing secure cryptographic schemes for a long time. Schemes based on these problems are also assumed to be secure against quantum computers. However, these schemes are often considered impractical for real-world deployment due to large key sizes and inefficient computation time. In the recent call for standardization of additional post-quantum digital signatures by the National Institute of Standards and Technology, several code-based candidates have been proposed, including LESS, CROSS, and MEDS. These schemes are designed on the {relatively new} zero-knowledge framework. 
Although several works analyze the hardness of these schemes, there is hardly any work that examines the security of these schemes in the presence of physical attacks. 

In this work, we analyze these signature schemes from the perspective of fault attacks. All these schemes use a similar tree-based construction to compress the signature size. We attack this component of these schemes. Therefore, our attack is applicable to all of these schemes. In this work, we first analyze the LESS signature scheme and devise our attack. Furthermore, we showed how this attack can be extended to the CROSS signature scheme. Our attacks are built on very simple fault assumptions. Our results show that we can recover the entire secret key of LESS and CROSS using as little as a single fault. Finally, we propose various countermeasures to prevent these kinds of attacks and discuss their efficiency and shortcomings.


\keywords{Post-quantum cryptography \and Post-quantum signature\and Code-based cryptography\and Fault attacks\and LESS\and CROSS}

\end{abstract}

%
%

\section{Introduction}\label{sec:introduction}
Digital signature schemes are one of the most used and fundamental cryptographic primitives. The security of our current prevalent digital signature schemes based on integer factorization~\cite{RSA} or elliptic curve discrete logarithms~\cite{ECC_miller_Crypto86} can be compromised using a large quantum computer~\cite{Shor_1994,Proos_Zalka_2003}. Therefore, we need quantum computer-resistant digital signature algorithms. 
{In 2022, the National Institute of Standards and Technology (NIST) selects three post-quantum digital signature schemes{~\cite{nist_final_report}} CRYSTALS-DILITHIUM{~\cite{CRYSTALS_Dilithium_Digital_Signatures_from_Module_Lattices}}, FALCON, and SPHINCS+{~\cite{The_SPHINCS_Signature_Framework}} for standardization.} {Among them, FALCON and DILITHIUM are based on lattices, and SPHINCS+ is a hash-based signature scheme.} 

{A majority of these signature schemes are lattice-based. Therefore, a breakthrough result in the field of cryptanalysis of lattice-based cryptography could create a major dilemma in the transition from classical to post-quantum cryptography. }
Such incidents are not very rare. Some recent examples are Castryck~\textit{et al.}'s~\cite{An_Efficient_Key_Recovery_Attack_on_SIDH} attack on the post-quantum key-exchange mechanism based on supersingular isogeny Diffe-Hellman~\cite{Towards_quantum_resistant_cryptosystems_from_supersingular_elliptic_curve_isogenies} problem or Beullens's attack~\cite{beullens_rainbow} on post-quantum digital signature scheme Rainbow~\cite{rainbow}. Both of these schemes {were finalists} of the NIST's post-quantum standardization procedure. 
Therefore, diversification in the underlying hard problems ensures that if one of the cryptographic schemes is compromised, others may remain secure. Another problem of the currently standardized signature schemes is their very large signature sizes compared to classical signatures. This renders them almost impractical for real-world use cases like SSL/TLS {certificate chains}. Recognizing the critical importance of diversification and the practical use of digital signatures, NIST has recently issued an additional call~\cite{nist_additional_call} for post-quantum secure digital signatures. In this call, NIST emphasizes the importance {of small signature and fast verification to enhance practicality.} 

Linear Equivalence Signature Scheme (LESS)~\cite{LESS_is_More,LESS_Specification_Doc} is a submitted digital signature scheme aimed at increasing diversification and smaller signature and public key sizes. There are other code-based submissions like WAVE~\cite{WAVE_Specification_Doc}, enhanced pqsigRM~\cite{Enhanced_pqsigRM:Code-Based_Digital_Signature_Scheme_with_Short_Signature_and_Fast_Verification_for_Post-Quantum_Cryptography}, and CROSS~\cite{CROSS_Specification_Doc}. These schemes are based on the Syndrome Decoding Problem (SDP) for linear codes. The hardness of SDP relies on different variants of Information Set Decoding (ISD) algorithms. On the other hand, LESS has avoided the SDP, and it is the first cryptographic scheme {based on} the Code Equivalence Problem (CEP). The CEP asks to determine if two linear codes are equivalent to each other. {In the Hamming metric}, the notion of equivalence is linked to the existence of a monomial transformation, often termed the Linear Equivalence Problem (LEP).
 

 Due to the choice of this hard problem, the designers could choose parameters that lead to smaller key sizes without compromising security. The designers have also proposed different compression techniques to reduce the key sizes.
 LESS offers a balanced trade-off between {the combined public key and signature size} and the efficiency of signing and verification routines. Table~\ref{tab:comparison_of_sizes} in Appendix~\ref{appendix:comparison} compares the key sizes and efficiency of LESS and other code-based digital signature schemes.

We want to note that LESS first introduced the novel problem CEP or LEP for cryptographic constructions. {It uses a 3-round interactive sigma protocol between a prover and a verifier.} 
{Other signature schemes like MEDS and CROSS are also based on similar zero-knowledge identification schemes.} Multiple rounds of the identification scheme are used here, which is converted into a signature scheme using the Fiat-Shamir transformation.
However, using multiple rounds increases the signature size.
Here, we have noticed that all three signature schemes, LESS, CROSS and MEDS~\cite{Take_your_MEDS:_Digital_Signatures_from_Matrix_Code_Equivalence}, use the same compression technique that helped the designers ease the long-enduring bottleneck of large signature sizes in code-based cryptography. However, the implementation of this common compression technique has potential vulnerabilities against fault attacks that we identified in this work. 
Our primary motivation in this work is to uncover potential vulnerabilities against a wide spectrum of fault attacks and propose suitable countermeasures for the schemes LESS and CROSS that use the protocols having the same compression technique. We are confident that this work will help to improve the LESS and CROSS signature schemes and be useful in the evaluation of NIST's standardization procedure. {Further, we strongly believe that this will also be beneficial to other cryptographic signature schemes, such as MEDS, as it uses a similar technique.} Below, we briefly summarize our contributions.

\noindent
\textbf{Fault analysis of LESS digital signature:} {We have explored several fault attack surfaces of the LESS signature scheme that could be exploited by an adversary. We found different attack surfaces in the signing algorithm of LESS, and attack strategies that can be utilized on those attack surfaces. We observed that the designers of LESS proposed a technique to compress the signature size.} They used a binary tree called \textit{Reference Tree} to fulfil this purpose. We show that the modification of the values in the tree during the signing algorithm leaks information about the secret key as part of the output signature. We further use this information to recover the full secret key.

\noindent
{\textbf{Versatility of our fault attack:} Our attack assumes a single fault injection model. We want to note that our focus was to develop the theoretical framework to recover the secret after the fault injection.
In this regard, our attack can be realized using many different faults. Therefore, it is very versatile \textit{i.e.} not skewed in favour of the attacker.} In particular, we discuss the applicability of our attack using different types of faults, such as instruction skip, stuck-at-zero, and bit-flip. These types of faults can be realized using different mechanisms such as voltage glitch~\cite{Secret_External_Encodings_Do_Not_Prevent_Transient_Fault_Analysis}, Rowhammer~\cite{A_practical_key-recovery_attack_on_LWE-based_key-encapsulation_mechanism_schemes_using_Rowhammer,A_new_approach_for_rowhammer_attacks}, clock glitch~\cite{DBLP:journals/tches/BruinderinkP18,Fault_Attacks_on_CCA-secure_Lattice_KEMs}, laser fault injection~\cite{cryptoeprint:2022301}, electromagnetic fault injection~\cite{DBLP:journals/iacr/GenetKPM18,DBLP:journals/iacr/KunduCSKMV23} etc. 

\noindent
\textbf{Strong mathematical analysis:} We give detailed mathematical analysis to recover the secret key after the fault injections. We consider an arbitrary location for the fault injection, which is known to the attacker. Then, discuss the methods to recover the secret key in different scenarios. To further improve the effectiveness and practicality of our attack, we also provide a very effective method to remove noise from the experiments \textit{i.e.} differentiating between effective and ineffective faults. This is a non-trivial problem in any fault injection attack. We mathematically derived the expected amount of secret information that can be recovered from a single effective fault. 

\noindent \textbf{Application to other zero-knowledge based signature schemes: } Other code-based signature schemes in the NIST additional call for signatures such as CROSS~\cite{CROSS_Specification_Doc} and MEDS~\cite{Take_your_MEDS:_Digital_Signatures_from_Matrix_Code_Equivalence}, use a similar zero-knowledge framework as LESS. In these frameworks, the challenger and prover must communicate a series of challenges and responses for the soundness of schemes. This increases the signature size of the digital signature schemes designed using this framework. All these three signature schemes use a binary tree-based compression technique to reduce the signature size. As our attack targets this method, our attack strategy can also be extended to these schemes. We have explained this strategy for the CROSS signature scheme in this work.

\noindent
\textbf{Attack simulation:} We have an end-to-end fault attack simulation on the reference implementation of LESS and CROSS signature schemes. For LESS, we have simulated the attack in a way so that it can count the number of secret {matrix} recovered with one faulted signature, the number of faulted {signatures} required to recover the whole secret. {Also, our simulation induces fault with varying successful fault probability.} In both {schemes}, we modify a particular node of the binary tree structure and then recover the secret from the faulted signatures. {We have shown that if we inject fault in a specific location, then the entire secret can be recovered from a single effective fault signature for all the parameter sets of LESS except the parameter LESS-1s. For the CROSS signature scheme, only one effective faulted signature is enough to recover the complete secret for all parameters.}

\noindent
\textbf{Countermeasures:} Finally, we discuss different countermeasures that can prevent such attacks. We show that these countermeasures are effective against the single-fault attack models. Our first countermeasure removes the primary source of vulnerability \textit{i.e.} the generation of the \textit{Reference Tree}. This rather simple method increases the signature size. The second countermeasure modifies the \textit{Reference Tree} generation procedure such that the attack surfaces are eliminated. This method ensures that the signature sizes stay the same as the original LESS proposal~\cite{LESS_Specification_Doc}. {Lastly, we implemented the second countermeasure for LESS and compared its performance with the original LESS implementation. The performance cost of our second countermeasure is the same as the cost of the original LESS implementation.}

\section{{Preliminaries}}\label{sec:Preliminary}
$\ring{Z}{q}$ denotes the ring of integers modulo $q$. Additionally, $\field{F}{q}$ and $\group{F}{q}$ have been used to signify the field with $q$ elements and the multiplicative group {of this field $\field{F}{q}$}, respectively. The sets $\fieldvec{F}{q}{k}$ and $\fieldmat{F}{q}{k}{n}$ represent the collection of all vectors of size $k$ and all matrices of dimension $k\times n$ over the field $\field{F}{q}$, respectively. We use calligraphic uppercase ($\code{C}$) to denote a linear code. 

The lowercase letters ($a$) and uppercase letters ($A$) denote the scalars and the ordered set of scalars, respectively. $A^{c}$ represents the complement of the set $A$.  We use bold lowercase ($\vect{a}$) to denote vectors in any domain, and the $i$-th entry of the vector $\vect{a}$ is denoted by $\vecentry{a}{i}$. We denote the $i$-th standard basis 
as $\basisvec{i}$. The transpose of a vector $\vect{a}$ is denoted by $\vect{a}^T$.

The bold uppercase letters ($\mat{A}$) represent matrices. Let $\mat{A}$ be a matrix, then $\matentry{A}{i}{j}$ represents the $i,\ j$-th entry of the matrix $\mat{A}$. Also, $\matcol{A}{j}$ and $\matrow{A}{i}$ represent the  $j$-th column and $i$-th row of the matrix $\mat{A}$ respectively. Let $J\subset \mathbb{Z}_n$ be an ordered set of column indices of the matrix $\mat{A}$, then the notation $\matcol{A}{J}$ represents the submatrix of $\mat{A}$ formed by selecting columns with indices specified in the set $J$. Similarly, if $J$ is an ordered set of row indices of matrix $\mat{A}$, then the notation $\matrow{A}{J}$ represents the submatrix of $\mat{A}$ formed by selecting rows with indices specified in the set $J$. The transpose of a matrix $\mat{A}$ is denoted by $\mat{A}^T$. The inner product of two vectors $\vect{a}$ and $\vect{b}$ of same size is denoted by $\langle\vect{a},\ \vect{b}\rangle$ and is defined by $\sum_{i}{}\vecentry{a}{i}\vecentry{b}{i}$. The set of all invertible matrices of order $k$ over $\field{F}{q}$ is denoted by $GL_{k}(q)$.
\subsection{Definitions}
\begin{definition}[Monomial matrix]
An $n\times n$ matrix $\mat{A}$ is called a \textit{monomial matrix} if we can write $\mat{A}:=(\vecentry{u}{0}\basisvec{\pi(0)}~|~\vecentry{u}{1}\basisvec{\pi(1)}~|~\cdots~|~\vecentry{u}{n-1}\basisvec{\pi(n-1)})$. Here, $\vect{u}\in\mathbb{F}_q^n$, $\pi:\ring{Z}{n}\rightarrow \ring{Z}{n}$ is a permutation and $\vecentry{u}{j}\basisvec{\pi(j)}$ is $j$-th column of $\mat{A}$. We represent the monomial matrix $\mat{A}$ with the pair $(\pi,\ \vect{u})$. 
\end{definition}

\begin{definition}[Partial monomial matrix]
    An $n\times k$ matrix $\mat{B}$ is called a partial monomial matrix if we can write the matrix $\mat{B}:=(\vecentry{v}{0}\basisvec{\pi_{*}(0)}~|~\vecentry{v}{1}\basisvec{\pi_{*}(1)}~|~\cdots~|~\vecentry{v}{k-1}\basisvec{\pi_{*}(k-1)})$. Here, 
$n>k$, $\vect{v}\in\mathbb{F}_q^k$ and $\pi_{*}:\ring{Z}{k}\rightarrow\ring{Z}{n}$ is an injective mapping. We represent the \textit{partial monomial matrix} $\mat{B}$ with the pair $(\pi_{*},\ \vect{v})$.
\end{definition}  
We denote the set of all invertible monomial matrices of order $n$ and the set of all partial monomial matrices of order $n\times k$ over $\field{F}{q}$ by $M_{n}(q)$ and $M_{n,k}'(q)$ respectively.\\

\begin{definition}[Reduced Row-Echelon form] A matrix $\mat{A}$ of order $m\times n$ is said to be in Reduced Row-Echelon form (RREF) if the following conditions hold
\begin{enumerate}[label=\roman*., leftmargin=*]
    \item For each $0\leq i\leq m-1$, $0\leq j\leq n-1$, if the $i$-{th} row contains the first non-zero element at $j$-{th} position, then the first non-zero element of $(i+1)$-th row should be after the $j$-{th} position.
    \item The first non-zero element of any non-zero row is $1$.
    \item The leading element is the only non-zero element of that column.
\end{enumerate}
\end{definition}
We can transfer any matrix $\mat{A}$ to its \texttt{RREF} form by applying some elementary row operations~\cite{meyer2000matrix} on the matrix $\mat{A}$, and we denote this transformation by $\texttt{RREF}(\mat{A})$. Also, note that a matrix has a unique \texttt{RREF}. The first non-zero elements of $\texttt{RREF}(\mat{A})$ in each row are called \textit{pivots} and the columns that contain pivot are called \textit{pivot column} of the matrix $\texttt{RREF}(\mat{A})$. The remaining columns are called \textit{non-pivot columns}.

\begin{definition}[Lexicographically sorted order]
   Let $\vect{a}$ and $\vect{b}$ be two vectors of the same size over the field $\field{F}{q}$. We call the vectors $\vect{a}$ and $\vect{b}$ are in lexicographical order if $\vecentry{a}{i}<\vecentry{b}{i}$ holds, where $i$ is the first position where two vectors differ. We denote it as $\vect{a}<\vect{b}$. 
   Let there be $r$ vectors $\vect{v}_{0},\ \vect{v}_{1},\ \cdots,\ \vect{v}_{r-1}$ over the field $\field{F}{q}$. We call these vectors in lexicographically sorted order if, for any $0\leq i,\ j<r$, $\vect{v}_{i}<\vect{v}_{j}$ holds whenever $i<j$. 
\end{definition}
{A matrix G is lexicographically sorted if its columns are in ascending lexicographical order. }In this paper, the function $\texttt{LexMinCol}$ makes each column of input matrix $\mat{G}$ to lexicography sorted order by multiplying the inverse of the first non-zero element of that column and $\texttt{LexSort}$ function is used to sort the columns of $\mat{G}$ in lexicographically sorted order.

\begin{definition}[{Linear code}]
An $[n,\ k]$-linear code $\code{C}$ of length $n$ and dimension $k$ is a linear subspace of the vector space $\fieldvec{F}{q}{n}$. It can be represented by a matrix $\mat{G}\in\fieldmat{F}{q}{k}{n}$, which is called a generator matrix. For any $\vect{u}\in\fieldvec{F}{q}{k}$, the generator matrix $\mat{G}$ maps it to a code-word $\vect{u}\mat{G}\in\fieldvec{F}{q}{n}$.
\end{definition}  
\begin{definition}[{Linear code equivalence}]
    Let $\code{C}$ and $\code{C}'$ be two linear codes of length $n$ and dimension $k$ with generator matrices $\mat{G}$ and $\mat{G}'$ respectively. We call the codes $\code{C}$ and $\code{C}'$ linearly equivalent, if there exist matrices $\mat{Q}\in M_n (q)$, $\mat{S}\in GL_k(q)$ such that $\mat{G}' = \mat{SGQ}$.
\end{definition}
\begin{definition}[{Information Set (IS) of a Linear Code{~\cite{A_New_Formulation_of_the_Linear_Equivalence_Problem}}}]
{Let $\code{C}$ be a linear code with length $n$, and $J\subset \ring{Z}{n}$ be a set with cardinality $k$. Consider $\mat{G}$ as the generator matrix of code $\code{C}$. Define $J$ as an information set corresponding to $\mat{G}$ if inverse of $\mat{G}[*,\ J]$ exists $i.e.,$ $\mat{G}[*,\ J]$ is non-singular.}
\end{definition}

\subsection{LESS signature scheme}\label{sec:LESS}
The signature scheme LESS is based on the hardness of the Linear-code Equivalence Problem (LEP). LESS signature~\cite{LESS_Specification_Doc} uses a 3-round interactive sigma protocol~\cite{Identification_Protocols} between a prover and a verifier to establish the message's authenticity and the Fiat Shamir transformation~\cite{Identification_signatures_FiatShamir_transform} to transform this interactive protocol into a signature scheme. In this section, we describe the key generation and the signature algorithm of the digital signature LESS as it is most relevant to our work. Meanwhile, the verification algorithm is described in Appendix~\ref{sec:LESS_verification}. The description of the LESS signature involves some additional functions that we describe below.
 \begin{itemize}[leftmargin=*]
    \item $\texttt{CSPRNG}(\textit{seed},~\cdot~)$: This is a pseudo-random number generator, which takes a seed as input and outputs a pseudo-random string. The resulting output can be formatted according to preference, either as a string of seed values or a matrix. The uses of the function as $\texttt{CSPRNG}(\textit{seed},\ \mathbb{S}_{\tt RREF})$, $\texttt{CSPRNG}(\textit{seed},\ \mathbb{S}_{t,w})$ and $\texttt{CSPRNG}(\textit{seed},\ {M}_{n}(q))$ represents sampling a generator matrix in RREF, {sampling the fixed weight digest vector} and sampling a monomial matrix, respectively using the provided \textit{seed}.
    
    
    

    \item $\texttt{SeedTree}(\textit{seed},\ \textit{salt})$: This function generates a tree of height $\lceil\log\ t\rceil$. It begins with $\lambda$ bit input \textit{seed} and uses the \texttt{CSPRNG} function to generate $2\lambda$ bits. This long string is divided into two parts: the first $\lambda$ bits are used for the left child and the last $\lambda$ bits for the right child. The bits corresponding to each child are again fed into the \texttt{CSPRNG} with \textit{salt} to generate the next layer of the nodes in the tree. This process is repeated until the tree with height $\lceil\log\ t\rceil$ is constructed. 

    \item $\texttt{PrepareDigestInput}(\mat{G}, \ \mat{Q}')$: This function takes the matrices $\mat{G}$ which is in RREF and a monomial matrix $\mat{Q}'$ as inputs. Then computes $\mat{G}'$ as $(\mat{G}',\ pivot\_column)=\texttt{RREF}(\mat{GQ}'^{T})$. 
    Let $J=\left\{\alpha_{0},\ \alpha_{1},\ \cdots,\ \alpha_{k-1}\right\}$ be the set of pivot column indices{, which is essentially the information set (IS) of $\mat{G}'$}. Then, compute the partial monomial matrix $\overline{\mat{Q}}'$ and the matrix $\overline{\mat{V}}'$ as 
    $\overline{\mat{Q}}'=\mat{Q}'^{T}[*,\ J] \text{ and }\overline{\mat{V}}'= \texttt{LexSort}(\texttt{LexMinCol}( \mat{G}'[*,\ J^{c}]))\,.$
    After this computation, this function returns the partial monomial matrix $\overline{\mat{Q}}'$ and the matrix $\overline{\mat{V}}'$ as outputs.

    \item $\texttt{SeedTreePaths}(\vect{seed},\ \vect{f})$: Given a seed tree $\vect{seed}$ and a binary string $\vect{f}$ representing the leaves to be disclosed, this procedure derives which nodes of the seed tree should be disclosed so that the verifier can rebuild all the leaves which have been marked by the binary string. A detailed description of this function is given in Alg.~\ref{alg:SeedTreePaths}.


    \item \texttt{CompressRREF} and \texttt{CompressMono}:  \texttt{CompressRREF} function is used to compress a matrix $\mat{G}$ in RREF, and similarly \texttt{CompressMono} is used to compress a monomial matrix. Each compression procedure have corresponding expansion procedure that converts the compressed information to its proper matrix form. {Therefore, we can assume using or not using these function does not affect the functionality of key generation, signing or verification of LESS.}
    
\end{itemize}
\begin{algorithm}[!ht]
\caption{\texttt{LESS\_KeyGen($\lambda$)}~\cite{LESS_is_More,LESS_Hardware}}\label{alg:KeyGen}
\begin{algorithmic}[1]
\Require None
\Ensure $\text{SK}=(\textit{MSEED},\ \textit{gseed})$, $\text{PK}=(\textit{gseed},\ \mat{G}_{1},\ \dots,\ \mat{G}_{s-1})$

\State $\textit{MSEED}\xleftarrow{\$}\{0,\ 1\}^{\lambda}$
\State $\vect{mseed}\xleftarrow{}\texttt{CSPRNG}(\textit{MSEED}) \in \{0,\ 1\}^{(s-1)\lambda}$
\State $\textit{gseed}\xleftarrow{\$}\{0,1\}^{\lambda}$
\State $\mat{G}_{0}\leftarrow \texttt{CSPRNG}(\textit{gseed},\ \mathbb{S}_{\texttt{RREF}} )$
\For{$i=1;\ i<s;\ i=i+1$}
    \State $\mat{Q}_{i}\xleftarrow{}\texttt{CSPRNG}(\vecentry{mseed}{i},\ M_{n}(q))$
    \State $(\mat{G}_{i},\ pivot\_column)\leftarrow \texttt{RREF}(\mat{G}_{0}(\mat{Q}_{i}^{-1})^{T})$
    \State $\text{PK}[i]\leftarrow \texttt{CompressRREF}(\mat{G}_{i},~pivot\_column)$
\EndFor
\State Return $(\text{SK},\ \text{PK})$
\end{algorithmic}
\end{algorithm}
\subsubsection{Key Generation of LESS: }
It is presented in Alg.~\ref{alg:KeyGen}. Given a security parameter $\lambda$, the two outputs of this algorithm are the secret key $\text{SK}$ and the public key $\text{PK}$. The first component of the secret key is the master key $\textit{MSEED}\in \left\{0,\ 1\right\}^{\lambda}$. Using the \texttt{CSPRNG} function, the vector {$\vect{mseed}\in\left\{0,\ 1\right\}^{(s-1)\lambda}$} is generated from the $\textit{MSEED}$, which contains $s-1$ many $\lambda$-bit binary strings. Now, the $i$-th secret monomial matrix $\mat{Q}_{i}$ is generated from {$\vecentry{mseed}{i}\in \left\{0,\ 1\right\}^{\lambda}$}. Note that these generated $\mat{Q}_i$'s are all secret monomial matrices. Also, the seed $\textit{gseed}$ is employed in the generation of the public matrix $\mat{G}_{0}.$ The remaining part of the public key consists of the matrices $\mat{G}_{i}$ for $1\leq i \leq s-1$, which are generated using the process described in Alg.~\ref{alg:KeyGen}\footnote{{For simplicity and compactness, we follow the implementation of LESS instead of the specification document}}.
\begin{algorithm}[]
\caption{\texttt{LESS\_Sign}(${m},\ \text{SK}$)}\label{alg:Signature}
\begin{algorithmic}[1]
\Require Message $\textit{m}\in\ring{Z}{2}^\textit{len}$ and secret key $\text{SK}=(\textit{MSEED},\ \textit{gseed})$.
\Ensure The signature $\tau=(\textit{salt},\ \textit{cmt},\ \vect{TreeNode},\ \vect{rsp})$.
\State {$\vect{mseed}\xleftarrow{}\texttt{CSPRNG}(\textit{MSEED})\in \left\{0,\ 1\right\}^{(s-1)\lambda}$}
\State $\textit{EMSEED}\xleftarrow{\$}\{0,\ 1\}^{\lambda}$, $\textit{salt}\xleftarrow{\$}\{0,\ 1\}^{\lambda}$
\State $ \vect{seed}\xleftarrow{}\texttt{SeedTree}(\textit{EMSEED},\ \textit{salt})$
\State $ \vect{ESEED}=\text{Leaf nodes of the }\vect{seed}$
\State $\mathbf{G}_{0}\leftarrow \texttt{CSPRNG}(\textit{gseed},\ \mathbb{S}_\texttt{RREF})$

\For{$i=0;\ i<t;\ i=i+1$}
    \State $\widetilde{\mat{Q}}_{i}\xleftarrow{}\texttt{CSPRNG}(\vecentry{ESEED}{i},\ M_{n}(q))$
    \State $(\overline{\mat{Q}}_{i},\ \overline{\mat{V}}_{i})\xleftarrow{} \texttt{PrepareDigestInput}(\mat{G}_{0},\ \widetilde{\mat{Q}}_{i})$ 
\EndFor
\State $\textit{cmt} \leftarrow \texttt{H}(\overline{\mat{V}}_{0},\ \dots,\ \overline{\mat{V}}_{t-1},\ \textit{m},\ \textit{len},\ \textit{salt})$
\State $\vect{d}\leftarrow \texttt{CSPRNG}(\textit{cmt},\ \mathbb{S}_{t,w})$ 

\For{$i=0;\ i<t;\ i=i+1$}
\If{$\vecentry{d}{i}=0$} \State $\vecentry{f}{i}=0$
\Else \State $\vecentry{f}{i}=1$ 
\EndIf
\EndFor

\State $\vect{TreeNode}\leftarrow \texttt{SeedTreePaths}(\vect{seed},\ \vect{f})$\Comment{(Alg.~\ref{alg:SeedTreePaths})}

\State $k=0$
\For{$i=0;\ i<t;\ i=i+1$}
    \If{$\vecentry{d}{i}\neq 0$}
    \State $j=\vecentry{d}{i}$
    \State $\mat{Q}_{j}\xleftarrow{}\texttt{CSPRNG}(\vecentry{mseed}{j},\ {M}_{n}(q))$
    \State $\mat{Q}^*_{k}\leftarrow \mat{Q}_{j}^{T}\overline{\mat{Q}}_{i}$
    \State $\vecentry{rsp}{k}\leftarrow \texttt{CompressMono}(\mat{Q}^*_{k})$
    \State $k=k+1$
    \EndIf
\EndFor
\State Return $\tau=(\textit{salt},\ \textit{cmt},\ \vect{TreeNode},\ \vect{rsp})$
\end{algorithmic}
\end{algorithm}
\subsubsection{Signature algorithm of LESS:}\label{sec:LESS_sign}
The signature algorithm shown in Alg.~\ref{alg:Signature} takes a message string $\textit{m}$ of length $\textit{len}$ and the secret key $\text{SK}=(\textit{MSEED},\ \textit{gseed})$ as inputs and returns a corresponding signature $\tau$. 
The main secret key component of $\text{SK}$ is the master seed $MSEED$. All the $s-1$ monomial matrices $\mat{Q}_{j}$ are generated from the $MSEED$ and used to produce signatures. That is, instead of having information of $MSEED$, if we have the information of all of $s-1$ monomial matrices $\mat{Q}_{j}$, then we can construct the same valid signature. Therefore, these monomial matrices $\mat{Q}_{j}$ are considered equivalent to the secret key component $MSEED$. To reduce the signature size, the authors of LESS have incorporated a method involving tree construction. We explain this process briefly here.
\begin{figure}[!ht]
    \centering
    \includegraphics[width=.4\linewidth]{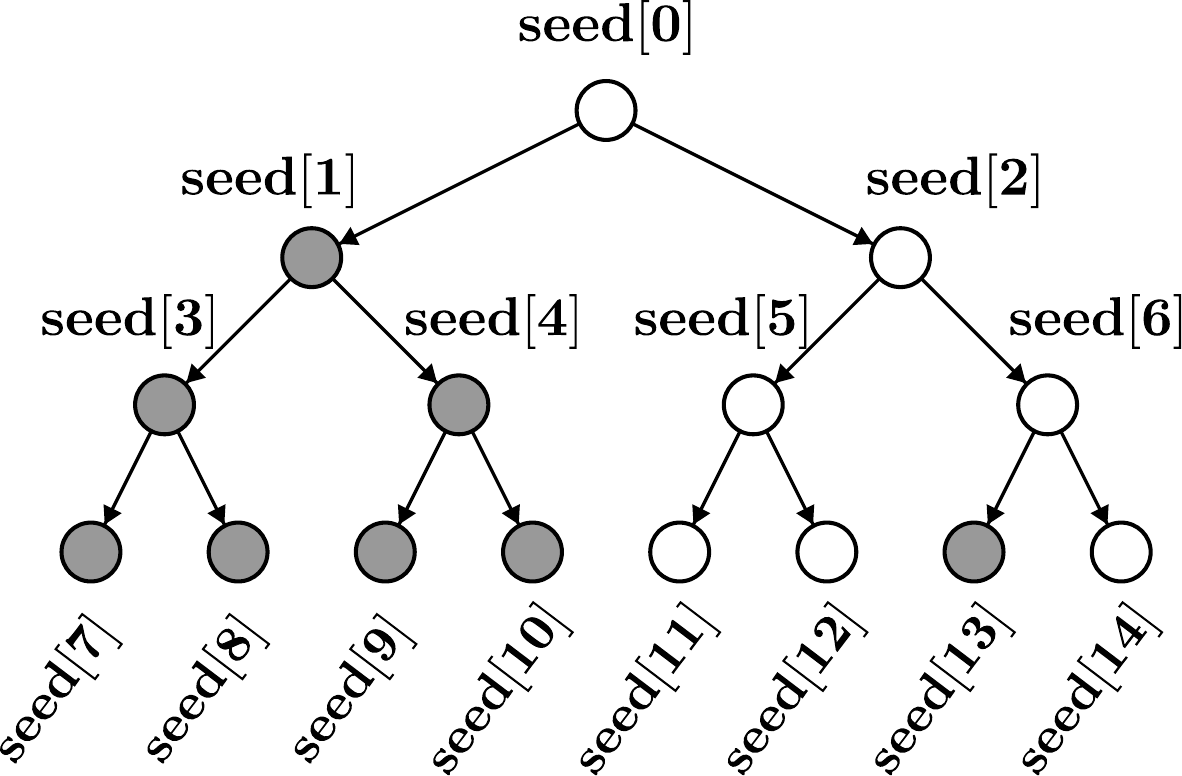}
    \caption{Example of seed tree}
    \label{fig:seedtree}
\end{figure}

First, we outline the procedure for generating a set of $t$ ephemeral monomial matrices represented by $\widetilde{\mat{Q}}_{0},\ \widetilde{\mat{Q}}_{1},\ \cdots,\ \widetilde{\mat{Q}}_{t-1}$ through the generation of $t$ random ephemeral seeds denoted as $\vecentry{ESEED}{i}$ for $0\leq i< t$. The process involves the following steps:
\begin{itemize}[leftmargin=*]
    \item Start by sampling a random master seed $\textit{EMSEED}\xleftarrow{\$}\left\{0,\ 1\right\}^{\lambda}$.
    \item Build a tree of seed nodes using \texttt{SeedTree} procedure, with output tree named $\vect{seed}$. The height and the number of leaf nodes of the output tree are $\lceil \log(t)\rceil$ and $2l=2^{\lceil \log(t)\rceil}$ respectively where the input seed is the master seed $\textit{EMSEED}$.
    \item  Select the first $t$ leaf nodes of the $\vect{seed}$ as the ephemeral seeds $\vecentry{ESEED}{i}$, where  $\vecentry{ESEED}{i}=\vecentry{seed}{2l-1+i}$ for $0\leq i<t$.
    \item Using \texttt{CSPRNG} function $\widetilde{\mat{Q}}_i$ is prepared for each $\vecentry{ESEED}{i}$.
\end{itemize} 
Lines 6-8 of Alg.~\ref{alg:Signature}, correspond to generating the partial monomial matrices $\overline{\mat{Q}}_i$, the matrices $\overline{\mat{V}}_i$ having the information of the non-pivot part corresponding to the matrix $\mat{G}_{0}\widetilde{\mat{Q}}_{i}^{T}$. Using the information of all $\overline{\mat{V}}_{i}$ matrices, message $m$, message length $len$ and $salt$, the digest $\vect{d}\in \ring{Z}{s}^{t}$ is prepared. This digest vector $\vect{d}$ has fixed weight $w$, where weight of the vector $\vect{d}$ is defined as $wt(\vect{d}):=|\left\{i:~\vecentry{d}{i}\neq 0\right\}|$.
We will briefly discuss the \texttt{SeedTreePaths} procedure in Alg.~\ref{alg:SeedTreePaths}, as our attack is based on exploting this procedure. This \texttt{SeedTreePaths}~(Alg.~\ref{alg:SeedTreePaths}) helps to reduce the size of the signature. Finally, the signature will return $\mat{Q}_{\vecentry{d}{i}}^T\overline{\mat{Q}_i}$ whenever $\vecentry{d}{i}\neq 0$, and it also reveals the seed nodes so that $\widetilde{\mat{Q}}_i$ can be generated from the revealed seed nodes for all $i$ such that $\vecentry{d}{i}=0$. Note that whenever we try to return $\widetilde{\mat{Q}}_i$, it is enough to return $\vecentry{ESEED}{i}$'s instead. Also, having the information of any ancestor node of the seed $\vecentry{ESEED}{i}(=\vecentry{seed}{2l-1+i})$, we can get the information of $\vecentry{ESEED}{i}$. This is the idea behind the minimization of the number of seeds that are to be sent. This minimized set is returned as $\vect{TreeNode}$. Consider the example in Fig.~\ref{fig:seedtree}, where the leaf nodes are $\vecentry{ESEED}{i}$'s and the shaded leaf nodes represent all those positions where $\vect{d}$ takes the value 0 $i.e.$, these are the $\vecentry{ESEED}{i}$'s that are to be revealed. Observe that revealing only $\vect{TreeNode}=(\vecentry{seed}{1},\ \vecentry{seed}{13})$ is enough, as the required $\vecentry{ESEED}{i}$'s can be regenerated at the time of verification. Consequently, this minimizes the signature size.

Now, in lines 18-24 of Alg.~\ref{alg:Signature}, the $\vect{rsp}$ is prepared by appending the partial monomial matrices $\mat{Q}_{\vecentry{d}{i}}^T\overline{\mat{Q}}_i$ for all $i$ such that $\vecentry{d}{i}$ is non-zero. Since the length and the weight of the fixed weight digest $\vect{d}$ are $t$ and $w$ respectively, the vector $\vect{d}$ has exactly $w$ many non-zero elements and $t-w$ many zero elements. Therefore, the signature will contain the component $\vect{rsp}$ having exactly $w$ many matrices of the form $\mat{Q}_{\vecentry{d}{i}}^T\overline{\mat{Q}}_i$. After all of these computations, $(salt,$ $cmt,$ $\vect{TreeNode},$ $\vect{rsp})$ is generated as the signature.
\subsection{Parameter set}
There are {three security levels} of LESS~\cite{LESS_Specification_Doc} and their corresponding parameter sets, which are shown in Table~\ref{tab:parameters}.
Here, the code parameters are given by 
$n$: the length of the code, $k$: the dimension of the code, $q$: prime modulus corresponding to the finite field $\mathbb{F}_q$, $2l$: the number of leaf nodes of the seed tree, where $2l=2^{\lceil\log t\rceil}$, $t$: the length of the digest $\vect{d}$, $w$: the fixed weight of the digest $\vect{d}$ and $s$: $s-1$ is the number of secret monomial matrices. 
{According to the LESS documentation {\cite{LESS_Specification_Doc}}, multiple parameter sets are defined for each security level of LESS, and the optimization criteria for each of these parameter sets are different. The "b" version (e.g., LESS-1b) refers to the parameter set with balanced public key and signature size, the "s" (e.g., LESS-1s) version refers to the parameter set with smaller signature size, and the "i" (only LESS-1i) version refers to the parameter set with intermediate public key and signature size.}

\begin{table}[!ht]
\centering
\begin{tabular}{c|c|ccccccc|c|c}
\hline
\multirow{2}{*}{\begin{tabular}[c]{@{}c@{}}Security\\  level\end{tabular}} & \multirow{2}{*}{\begin{tabular}[c]{@{}c@{}}Parameter\\  set\end{tabular}} & \multicolumn{7}{c|}{Parameters} & \multirow{2}{*}{\begin{tabular}[c]{@{}c@{}}{Public key (PK)}\\  {(KiB)}\end{tabular}} & \multirow{2}{*}{\begin{tabular}[c]{@{}c@{}}{Signature ($\tau$)}\\ {(KiB)}\end{tabular}} \\ \cline{3-9}
 &  & $n$ & $k$ & $q$ & $l$ & $t$ & $w$ & $s$ &  &  \\ \hline
\multirow{3}{*}{1} & LESS-1b & \multirow{3}{*}{252} & \multirow{3}{*}{126} & \multirow{3}{*}{127} & \multirow{3}{*}{128} & 247 & 30 & 2 & {13.7} & {8.1} \\
 & LESS-1i &  &  &  &  & 244 & 20 & 4 & {41.1} & {6.1} \\
 & LESS-1s &  &  &  &  & 198 & 17 & 8 & {95.9} & {5.2} \\ \hline
\multirow{2}{*}{3} & LESS-3b & \multirow{2}{*}{400} & \multirow{2}{*}{200} & \multirow{2}{*}{127} & \multirow{2}{*}{512} & 759 & 33 & 2 & {34.5} & {18.4} \\
 & LESS-3s &  &  &  &  & 895 & 26 & 3 & {68.9} & {14.1} \\ \hline
\multirow{2}{*}{5} & LESS-5b & \multirow{2}{*}{548} & \multirow{2}{*}{274} & \multirow{2}{*}{127} & 1024 & 1352 & 40 & 2 & {64.6} & {32.5} \\
 & LESS-5s &  &  &  & 512 & 907 & 37 & 3 & {129.0} & {26.1} \\ \hline
\end{tabular}
\caption{Parameter set of LESS~\cite{LESS_Specification_Doc} for different security levels}
\label{tab:parameters}
\end{table}

%
\section{Our Work: Fault analysis of LESS}\label{sec:OurWork} 
 {One of the strongest physical attacks on the digital signature schemes is to recover the secret or signing key, as the adversary can compute any valid message and signature pair using the recovered signing key.} 
 {In general, only the key generation and the signing algorithm involve the secret key. However, only the signing algorithm uses the long-term secret key (the same secret key is used multiple times), making it most suitable for performing a physical attack}~\cite{DBLP:journals/tches/BruinderinkP18,cryptoeprint:2017/1014,fault_rsa,flush_gauss_bliss}.
In this work, our objective is to mount a fault attack on the zero-knowledge based digital signature schemes. {In this attack model, the adversary would query the faulted signature oracle (which outputs a signature with some injected faults) multiple times.}
In this section, we will progressively describe our fault attack strategy to recover the secret monomial matrices for the LESS signature scheme. Later in Section~\ref{sec:CROSS}, we show that the same attack strategy can be employed in other zero-knowledge based signature schemes, such as CROSS, to recover the signing key.

\subsection{An observation on LESS}\label{subsec:Observation}
LESS signature algorithm presented in Alg.~\ref{alg:Signature} returns either the information of the monomial matrix $\widetilde{\mat{Q}}_j$ or the multiplication  $\mat{Q}_{\vecentry{d}{j}}^{T}\overline{\mat{Q}}_j$ for any $j\in\ring{Z}{t}$. Here, $\overline{\mat{Q}}_j$ is a \textit{partial monomial} matrix that is generated from the matrix $\widetilde{\mat{Q}}_{j}$ by using
the \texttt{PrepareDigestInput} function. If we manage to get a pair $(\widetilde{\mat{Q}}_{j},\ \mat{Q}_{\vecentry{d}{j}}^{T}\overline{\mat{Q}}_{j})$ for some $\vecentry{d}{j}\neq 0$, then we can construct the pair $(\overline{\mat{Q}}_{j},\ \mat{Q}_{\vecentry{d}{j}}^{T}\overline{\mat{Q}}_{j})$. This pair $(\overline{\mat{Q}}_{j},\ \mat{Q}_{\vecentry{d}{j}}^{T}\overline{\mat{Q}}_{j})$ leaks some information of matrix $\mat{Q}_{\vecentry{d}{j}}^{T}$ that is directly follows from the following lemma.

 \begin{lemma}\label{lemma:partial_monomial}
 {
 Let $\mat{A}=(\pi,\ \vect{u})\in M_{n}(q)$ be a monomial matrix and $\mat{B}=(\pi',\ \vect{u'})\in M_{n, k}'(q)$ be a partial monomial matrix. Let $\mat{C}=(\pi'',\ \vect{u''})\in M_{n, k}'(q)$ be the partial monomial matrix defined by $\mat{C}=\mat{A}^T\mat{B}$. Given the matrices $\mat{B}$ and $\mat{C}$, we can compute exactly $k$ many columns of the monomial matrix $\mat{A}^T$. More specifically, for all $0\leq j<k$, we can compute $\pi^{-1}(\pi'(j))$ and $\vect{u}[\pi^{-1}(\pi'(j))]$.    
 }
 \end{lemma}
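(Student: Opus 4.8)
The statement is a direct column‑by‑column computation, essentially the transposed analogue of the identity used in the earlier (commented) version of this lemma. The plan is: (i) write $\mat{A}^T$ explicitly as a monomial matrix; (ii) expand $\mat{C}=\mat{A}^T\mat{B}$ one column at a time; (iii) match the result against the $(\pi'',\vect{u}'')$ description of $\mat{C}$; (iv) solve for the unknown pieces of $\mat{A}$ and check that everything used is actually available from $\mat{B}$ and $\mat{C}$.

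\textbf{Step 1 (transpose of a monomial matrix).} Since $\matcol{A}{i}=\vecentry{u}{i}\basisvec{\pi(i)}$, the only non‑zero entry of $\mat{A}$ in column $i$ is $\matentry{A}{\pi(i)}{i}=\vecentry{u}{i}$. Transposing, the only non‑zero entry of $\mat{A}^T$ in column $\ell$ is $\matentry{(A^T)}{\pi^{-1}(\ell)}{\ell}=\vecentry{u}{\pi^{-1}(\ell)}$, i.e. $\matcol{(A^T)}{\ell}=\vecentry{u}{\pi^{-1}(\ell)}\basisvec{\pi^{-1}(\ell)}$. So $\mat{A}^T$ is again a monomial matrix, with permutation $\pi^{-1}$ and scalar vector $\ell\mapsto\vecentry{u}{\pi^{-1}(\ell)}$.

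\textbf{Step 2 (expand the product) and Step 3 (match).} For each $0\le j<k$, using $\matcol{B}{j}=\vecentry{u'}{j}\basisvec{\pi'(j)}$ and linearity, $\matcol{C}{j}=\mat{A}^T\matcol{B}{j}=\vecentry{u'}{j}\,\matcol{(A^T)}{\pi'(j)}=\vecentry{u'}{j}\vecentry{u}{\pi^{-1}(\pi'(j))}\basisvec{\pi^{-1}(\pi'(j))}$. On the other hand, $\mat{C}=(\pi'',\vect{u}'')$ gives $\matcol{C}{j}=\vecentry{u''}{j}\basisvec{\pi''(j)}$. A non‑zero vector of the form (scalar)$\,\cdot\,\basisvec{(\text{index})}$ determines both its index and its scalar uniquely, so comparing the two expressions yields $\pi''(j)=\pi^{-1}(\pi'(j))$ and $\vecentry{u''}{j}=\vecentry{u'}{j}\vecentry{u}{\pi^{-1}(\pi'(j))}\pmod q$.

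\textbf{Step 4 (solve and count).} From $\mat{B}$ we read off $\pi'(j)$ and $\vecentry{u'}{j}$; from $\mat{C}$ we read off $\pi''(j)$ and $\vecentry{u''}{j}$. Hence $\pi^{-1}(\pi'(j))=\pi''(j)$ and $\vecentry{u}{\pi^{-1}(\pi'(j))}=\vecentry{u''}{j}\,\bigl(\vecentry{u'}{j}\bigr)^{-1}\pmod q$, the inverse existing because the scalars of the partial monomial matrix $\mat{B}$ lie in $\group{F}{q}$. Each such pair is exactly the content of the column of $\mat{A}^T$ at position $\pi'(j)$ (one non‑zero entry, in row $\pi^{-1}(\pi'(j))$, of value $\vecentry{u}{\pi^{-1}(\pi'(j))}$); and since $\pi'$ is injective, the indices $\pi'(0),\dots,\pi'(k-1)$ are pairwise distinct, so we have recovered $k$ distinct columns of $\mat{A}^T$. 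The only real care needed in the write‑up is bookkeeping of the transpose and of composition order (confirming it is $\pi^{-1}(\pi'(j))$, not $\pi(\pi'(j))$, that is pinned down), and verifying that every quantity on the right‑hand sides is literally an entry of the given matrices $\mat{B}$ or $\mat{C}$; neither is a genuine obstacle. \qed
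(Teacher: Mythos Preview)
Your proof is correct and follows essentially the same approach as the paper: write $\mat{A}^T$ explicitly as a monomial matrix, compute $\mat{A}^T\mat{B}$ column by column, match against the $(\pi'',\vect{u}'')$ description of $\mat{C}$, and solve. If anything, your write-up is slightly more careful than the paper's in explicitly invoking the injectivity of $\pi'$ to conclude that the $k$ recovered columns are distinct, and in noting that $\vecentry{u'}{j}\in\group{F}{q}$ so the inverse exists.
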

 \begin{proof}
 {
 For the monomial matrix $\mat{A}$ represented by $(\pi,\ \vect{u})$, the transpose of $\mat{A}$ is the following matrix
 }
 \begin{align*}
     \mat{A}^T=[\vecentry{u}{\pi^{-1}(0)}\basisvec{\pi^{-1}(0)}~|~\vecentry{u}{\pi^{-1}(1)}\basisvec{\pi^{-1}(1)}~|~\cdots~|~\vecentry{u}{\pi^{-1}(n-1)}\basisvec{\pi^{-1}(n-1)}]
 \end{align*}
 {
 The multiplication of the monomial matrix $\mat{A}^T$ with the partial monomial matrix $\mat{B}$ is given by 
 }
 \begin{align*}
     \mat{A}^T\mat{B}=[\vecentry{u}{\pi^{-1}(\pi'(0))}\vecentry{u'}{0}\basisvec{\pi^{-1}(\pi'(0))}~|~\cdots~|~\vecentry{u}{\pi^{-1}(\pi'(k-1))}\vecentry{u'}{k-1}\basisvec{\pi^{-1}(\pi'(k-1))}]\
 \end{align*}
 {
 Since $\mat{C}=\mat{A}^T\mat{B}$, so for all $0\leq j<k$ we have $\matcol{C}{j}=(\mat{A}^T\mat{B})[*,\ j]$, which implies $\vecentry{u''}{j}\basisvec{\pi''(j)}=\vecentry{u}{\pi^{-1}(\pi'_{*}(j)}\vecentry{u'}{j}\basisvec{\pi^{-1}(\pi'_{*}(j))}$. This gives us the following 
 }
 \begin{align*}
 \begin{aligned}
     &\vecentry{u''}{j}= \vecentry{u}{\pi^{-1}(\pi'(j))}\vecentry{u'}{j}\\
     &\pi''(j)=\pi^{-1}(\pi'(j)) 
\end{aligned}
\end{align*}
{
 Since $\mat{B}$ and $\mat{C}$ are known, we have the information of each $\pi'(j)$, $\pi''(j)$, $\vecentry{u'}{j}$ and $\vecentry{u''}{j}$ where $0\leq j<k$. Therefore for all $0\leq j<k$ we have,
 }
\begin{align}
\begin{aligned}
     &\vecentry{u}{\pi^{-1}(\pi'(j))}=\vecentry{u''}{j}(\vecentry{u'}{j})^{-1}\\
     &\pi^{-1}(\pi'(j))=\pi''(j) \label{equ:partialMono}
\end{aligned}
\end{align}
{
Note that we have computed $\pi'(j)$-th column of the matrix $\mat{A}^T$ for all $0\leq j<k$.
}
\qed
 \end{proof}

{
For simplicity, in this part, we will use consider the matrices $\widetilde{\mat{Q}}_j$, $\overline{\mat{Q}}_j$ and $\mat{Q}_{\vecentry{d}{j}}$ as the matrices $\widetilde{\mat{Q}}$, $\overline{\mat{Q}}$ and $\mat{Q}$ respectively. 
Recall the \texttt{prepareDigestInput} function, it was taking $\mat{G}_0$ and a monomial matrix $\widetilde{\bm{Q}} = (\widetilde{\pi},\ \widetilde{\vect{\upsilon}})$ as input and $\overline{\mat{Q}}$ is one of the outputs of the function. The $\overline{\mat{Q}}$ is computed in a way that $\mat{G}_0\overline{\mat{Q}}=\mat{G}_0(\widetilde{\mat{Q}})^T[*,J^\dagger]$, where $J^\dagger$ is an IS of $\mat{G}_0(\widetilde{\mat{Q}})^T$. From the definition of IS, we can say that $\mat{G}_0\overline{\mat{Q}}$ is a non-singular matrix. Observe that,}
\begin{align}
    \mat{G}_0\overline{\mat{Q}}=[\overline{\vect{\upsilon}}_{0}\cdot \vect{g}_{\overline{\pi}(0)}~|~
    \overline{\vect{\upsilon}}_{1}\cdot \vect{g}_{\overline{\pi}(1)}~|~
    \cdots~|~
    \overline{\vect{\upsilon}}_{k-1}\cdot \vect{g}_{\overline{\pi}(k-1)}
    ]\label{equ:IS}
\end{align}
{Where $\overline{\mat{Q}}$ is a partial monomial matrix represented by $(\overline{\pi},\ \overline{\vect{\upsilon}})$. Since the matrix representation in {Eq.~\ref{equ:IS}} is non-singular, the set $J=\{\overline{\pi}(i)~:~i\in\mathbb{Z}_k\}$ is the IS of $\mat{G}_0$. 

Now consider we are given the pair $(\widetilde{\mat{Q}},\ \mat{Q}^T\overline{\mat{Q}})$, where $\mat{Q}$ represented by $(\pi,\ \vect{\upsilon})$ and $\overline{\mat{Q}}$ is generated from $\widetilde{\mat{Q}}$ using the function \texttt{prepareDigestInput}. Now, $\mat{Q}^T\overline{\mat{Q}}$ is a partial monomial matrix and let it be represented by $(\pi_*,\ \vect{\upsilon}_*)$ then from Lemma~{\ref{lemma:partial_monomial}}, we can write that for any $j\in\mathbb{Z}_k$} 
\begin{align}
\begin{aligned}
    &\pi^{-1}(\overline{\pi}(i))=\pi_*(i)\\
    &\vect{\upsilon}[\pi^{-1}(\overline{\pi}(i))]=\vect{\upsilon}_*[i](\overline{\vect{\upsilon}}[i])^{-1}\cdot\label{eq:recover}
\end{aligned}
\end{align}
{This Eq.~{\ref{eq:recover}} gives us the partially recovered secret $i.e.$ only $k$ many columns of $\mat{Q}^T$. According to the definition of $\overline{\pi}$, the set $\{\overline{\pi}(i)~:~i\in\mathbb{Z}_k\}$ is the set $J$ which is the information set of $\mat{G}_0$. Now, if $\mat{Q}$ is a secret monomial then from the key generation of LESS, we can say that $\widehat{\mat{G}} = \text{RREF}(\mat{G}_0(\mat{Q}^T)^{-1})$ is a part of the public key. We can further write $\widehat{\mat{G}} = \mat{S}\mat{G}_0(\mat{Q}^T)^{-1}$ for some non-singular matrix $\mat{S}$. Consider $\widehat{\mat{G}} = [\widehat{\vect{g}}_0~|~\widehat{\vect{g}}_1~|~\cdots~|~\widehat{\vect{g}}_{n-1}]$ then for all $i\in\mathbb{Z}_n$ we have $\widehat{\vect{g}}_i=\mat{S}\cdot \left((\vect{\upsilon}[{i}])^{-1}\cdot\vect{g}_{\pi(i)}\right)$ which implies that for all $i\in\mathbb{Z}_n$, }
\begin{align}
    \widehat{\vect{g}}_{\pi^{-1}(i)}=\mat{S}\cdot \left((\vect{\upsilon}[{\pi^{-1}(i)}])^{-1}\cdot\vect{g}_{i}\right) 
\label{equ:G_hat}
\end{align} 
{
Consider that the set $J$ have the elements $j_0,\ j_1,\ \cdots,\ j_{k-1}$, and we take the matrix $\mat{G}^* = [\widehat{\vect{g}}_{\pi^{-1}(j_0)}~|~\widehat{\vect{g}}_{\pi^{-1}(j_1)}~|~\cdots~|~\widehat{\vect{g}}_{\pi^{-1}(j_{k-1})}]$ and also take the matrix}
$$\mat{G}' = [
(\vect{\upsilon}[\pi^{-1}(j_0)])^{-1}\cdot\vect{g}_{j_0}~|~
(\vect{\upsilon}[\pi^{-1}(j_1)])^{-1}\cdot\vect{g}_{j_1}~|~
\cdots~|~
(\vect{\upsilon}[\pi^{-1}(j_{k-1})])^{-1}\cdot\vect{g}_{j_{k-1}}]$$
{From {Eq.~\ref{equ:G_hat}}, we have $\mat{G}^*=\mat{S}\mat{G}'$ and since $J$ is an IS of $\mat{G}_0$, so $\mat{G}'$ is a non-singular matrix. Also $\mat{G}'$ and $\mat{G}^*$ are both computable as for each $j\in J$, $\pi^{-1}(j)$ and $\vecentry{v}{\pi^{-1}(j)}$ are already recovered. Therefore, we can compute $\mat{S} = \mat{G}^*\cdot(\mat{G}')^{-1}$. Finally, we have $\mat{S}^{-1}\widehat{\mat{G}} = \mat{G}_0(\mat{Q}^T)^{-1}$, where $\mat{S}$, $\mat{G}_0$ and $\widehat{\mat{G}}$ are known. Using {Alg.~\ref{alg:getColPerm}}, we can recover the full secret.
}
\begin{algorithm}[!ht]
\caption{{\texttt{getColumnPermutation}($\widehat{\mat{G}}, \mat{G}_0, \mat{S}$)}}\label{alg:getColPerm}
\begin{algorithmic}[1]
\Require {The partially recovered secret $\pi:J_*\to J$ and $\vecentry{v}{j}$ $\forall j\in J_*$, where $J_* = \{\pi^{-1}(i)~:~i\in J\}$, public information $\mat{G}_0$ and $\widehat{\mat{G}}$, recovered matrix $\mat{S}$}
\Ensure {Outputs rest of the secret $\pi:J^c_*\to J^c$ and $\vecentry{v}{j}$ $\forall j\in J^c_*$}
\State {$[\vect{g}_0~|~\vect{g}_1~|~\cdots~|~\vect{g}_{n-1}]\gets \mat{G}_0$}
\State {$[\widehat{\vect{g}}_0~|~\widehat{\vect{g}}_1~|~\cdots~|~\widehat{\vect{g}}_{n-1}]\gets \widehat{\mat{G}}$}
\For{{$j\in J^c$}}
    \For{{$i\in J^c_*$}}
        \For{{$a\in \mathbb{F}_q$}}
            \If{{$\vect{g}_j = a\cdot (\mat{S}^{-1} \widehat{\vect{g}}_i)$}}
                \State {assign $\pi(i) \gets j$}
                \State {assign $\vect{v}[i]\gets a$}
            \EndIf
        \EndFor
    \EndFor
\EndFor
\end{algorithmic}
\end{algorithm}

{We can conclude that from one pair $(\widetilde{\mat{Q}}_{j},\ \mat{Q}_{\vecentry{d}{j}}^{T}\overline{\mat{Q}}_{j})$, we can recover the secret monomial matrix $\mat{Q}_{\vecentry{d}{j}}^{T}$, where $\vecentry{d}{j}\neq 0$.} However, we will not receive the pair $(\widetilde{\mat{Q}}_{j},\ \mat{Q}_{\vecentry{d}{j}}^{T}\overline{\mat{Q}}_{j})$ if the signatures are generated by executing the signing algorithm properly. Therefore, we must find strategies to disrupt the normal flow of execution to help us get such pairs. Also, note that, if the number of secret monomial matrices$(s-1)$ is greater than one, then receiving only one such pair is not enough to retrieve all the secret monomials. So, we may require multiple faulted signatures to receive several such pairs and finally recover all the secret monomial matrices. All of these analysis are briefly described in the later sections. 

\subsection{Identification of attack surfaces}\label{subsec:attackvector}
As we observed that having one pair of the form $(\widetilde{\mat{Q}}_{j},\ \mat{Q}_{\vecentry{d}{j}}^{T}\overline{\mat{Q}}_{j})$ is enough to recover the secret matrix $\mat{Q}_{\vecentry{d}{j}}^{T}$, where $\vecentry{d}{j}\neq 0$. 
Also, observe that, in LESS, there are $s-1$ secret monomial matrices $\mat{Q}_{i}$ for $1\leq i\leq s-1$, and $t$ ephemeral monomial matrices $\widetilde{\mat{Q}}_j$ for $0\leq j<t$ as described in Section~\ref{sec:LESS_sign}. Hence, our goal is to find at least one pair of the form $(\widetilde{\mat{Q}}_{j},\ \mat{Q}_{\vecentry{d}{j}}^{T}\overline{\mat{Q}}_{j})$, where $1\leq \vecentry{d}{j}\leq s-1$ and $0\leq j<t$ by manipulating the signing algorithm.

{Note that, LESS is a code-based signature scheme based on the sigma-protocol with Fiat-Shamir transformation. In Alg.~{\ref{alg:Signature}}, the signer generates the random challenge $\vect{d}$ (fixed weight digest), from commitment (\textit{cmt}) using the pseudo-random function \texttt{CSPRNG}. Any fault injection before the challenge generation may modify the challenge value, but that is an output of a pseudo-random function. This would not help, as we need to recover the secret key. Therefore, we have targeted to inject a fault after the generation of $\vect{d}$.}

\subsubsection{\textbf{Modification of the vector $\vect{d}$: }}
As we can see from Alg.~\ref{alg:Signature}, the digest $\vect{d}$ ($\vecentry{d}{i}$ for $0\leq i<t$) value decides whether $\widetilde{\mat{Q}}_{i}$ is revealed or $\mat{Q}_{\vecentry{d}{i}}^T\overline{\mat{Q}}_i$ is revealed. Therefore, the most obvious target for fault injection is the digest $\vect{d}$ to reveal both $\widetilde{\mat{Q}}_{i}$ and $\mat{Q}_{\vecentry{d}{i}}^T\overline{\mat{Q}}_i$ for some $i$.
If we modify some value $\vecentry{d}{i}$ of $\vect{d}$ (line 11 in Alg.~\ref{alg:Signature}) from $0$ to some non-zero value $r$ by injecting fault, then we will get the information of $\mat{Q}_{r}^{T}\overline{\mat{Q}}_{i}$ instead of getting information of $\widetilde{\mat{Q}}_{i}$. Similarly, if we change the value of $\vecentry{d}{i}$ from non-zero value $r$ to $0$, then we will get the information of $\widetilde{\mat{Q}}_{i}$ instead of getting information about $\mat{Q}_{r}^{T}\overline{\mat{Q}}_{i}$. 
In both cases, we do not receive $\widetilde{\mat{Q}}_{i}$ and $\mat{Q}_{r}^{T}\overline{\mat{Q}}_{i}$ together. Therefore, modifying the $\vect{d}$ value does not satisfy our purpose.

\begin{algorithm}[!ht]
\caption{\texttt{SeedTreePaths}}\label{alg:SeedTreePaths}
\begin{algorithmic}[1]
\Require The \textit{Seed Tree} $\vect{seed}$ and the vector $\vect{f}$.
\Ensure Outputs $\vect{TreeNode}$ a subset of \textit{Seed Tree} which consists only the $\mathtt{seed}_i$'s that does not correspond to $\vecentry{f}{i}=1$.
\For{$i=0;\ i<4l-1;\ i=i+1$}
\State $\vecentry{x}{i} =0$
\EndFor

\State $\vect{x}\leftarrow \texttt{compute\_seeds\_to\_publish}(\vect{f},\ \vect{x})$ \Comment{(Alg.~\ref{alg:ComputeSeedToPublish})}
\State $j = 0$
   \For{$i = 0;\ i<4l-1;\ i=i+1$} 
   \If{$(\vecentry{x}{i}=0\text{ and }\vecentry{x}{Parent(i)}=1)$}
   \State $\vecentry{TreeNode}{j} = \vecentry{seed}{i}$
 \State $j=j+1$
 \EndIf
\EndFor
\State return $\vect{TreeNode}$
\end{algorithmic}
\end{algorithm}
One might think of using the cases $\vecentry{d}{i} = 0$ bypassing the check $\vecentry{d}{i}\neq 0$ (line 19) using a fault. However, $\vecentry{mseed}{0}$ does not exist and might cause an error during execution. Therefore, modifying anything from lines 18-24 would not benefit us. Now, we analyse the remaining steps (lines 11-16) of Alg.~\ref{alg:Signature}. In these steps, we can modify the value of the vector $\vect{f}$. Also, the $\texttt{SeedTreePaths}$ algorithm is another potential candidate for fault injection, which is presented in Alg.~\ref{alg:SeedTreePaths}. It uses an auxiliary function \texttt{compute\_seeds\_to\_publish} described in Alg.~\ref{alg:ComputeSeedToPublish}.
\begin{algorithm}[!ht]
\caption{$\texttt{compute\_seeds\_to\_publish}$}\label{alg:ComputeSeedToPublish}
\begin{algorithmic}[1] 
\Require A vector $\vect{f}$ of size $t$ and the \textit{Reference Tree} $\vect{x}$. 
\Ensure Modified \textit{Reference Tree} $\vect{x}$.
\For{$i=0;\ i<t;\ i=i+1$}
\State $\vecentry{x}{2l-1+i} =\vecentry{f}{i}$
\EndFor

\For{$i = 2l-2;\ i\geq 0;\ i=i-1$} 
    \State $\vecentry{x}{i}=\vecentry{x}{2i+1}\vee\vecentry{x}{2i+2}$
\EndFor
\State return $\vect{x}$
\end{algorithmic}
\end{algorithm}
In the \texttt{SeedTreePaths} procedure, a tree $\vect{x}$ of size $4l-1$ is initialized with all zero. We call this tree as \textit{Reference Tree}. In Alg.~\ref{alg:ComputeSeedToPublish}, the values of the leaf nodes of the \textit{Reference Tree} are updated according to the $\vect{f}$ \textit{i.e.}, $\vecentry{x}{2l-1+i}$ are assigned the value $\vecentry{f}{i}$ for all $0\leq i<t$. The remaining nodes of the \textit{Reference Tree} are assigned the value using the formula $\vecentry{x}{i} = \vecentry{x}{2i + 1} \vee \vecentry{x}{2i + 2}$, signifying that if either child has a value of $1$, the corresponding parent will be assigned $1$. In this way, the value of the \textit{Reference Tree} $\vect{x}$ has been updated in a bottom-up approach. Now, some locations in \textit{Seed Tree} are to be published as $\vect{TreeNode}$ with the help of the \textit{Reference Tree}. Alg.~\ref{alg:SeedTreePaths} checks if the $i$-th node of \textit{Reference Tree} $\vecentry{x}{i}$ is zero and its parent $\vecentry{x}{Parent(i)}$ is 1, where the function $Parent(\cdot)$ is defined as follows:
$$Parent(i)=
\begin{cases}
    0&\text{if } i=0\\
    \lfloor\frac{i-1}{2}\rfloor&\text{otherwise}
\end{cases}
$$
If the validity check is satisfied, then $\vecentry{seed}{i}$, the $i$-th node of the \textit{Seed Tree} is appended to $\vect{TreeNode}$. 

\begin{example}\label{example:1}
    In Fig.~\ref{fig:NormalTree}, we have given an example for {leaf nodes} $2l=8$ and the vector $\vect{d}$ is chosen as $(0,\ 3,\ 1,\ 1,\ 0,\ 0,\ 0,\ 0)$. Then, the vector $\vect{f}$ will be $(0,\ 1,\ 1,\ 1,\ 0,\ 0,\ 0,\ 0)$. From Fig.~\ref{fig:NormalTree}, we can see that for $i=2,\ 7$ the condition "$\vecentry{x}{i}=0$ and $\vecentry{x}{Parent(i)}=1$" is satisfied. Therefore, the vector $\vect{TreeNode}=(\vecentry{seed}{2},\ \vecentry{seed}{7})$ and $\vect{rsp}=(\mat{Q}_{\vecentry{d}{1}}^{T}\overline{\mat{Q}}_{1},\ \mat{Q}_{\vecentry{d}{2}}^{T}\overline{\mat{Q}}_{2},\ \mat{Q}_{\vecentry{d}{3}}^{T}\overline{\mat{Q}}_{3})=(\mat{Q}_{3}^{T}\overline{\mat{Q}}_{1},\ \mat{Q}_{1}^{T}\overline{\mat{Q}}_{2},\ \mat{Q}_{1}^{T}\overline{\mat{Q}}_{3})$ will be extracted from \textit{Seed Tree} is revealed at the end. {From the seeds $\vecentry{seed}{2}$ $\vecentry{seed}{7}$, we can compute the leaf seeds $\vecentry{seed}{7},\ \vecentry{seed}{11},\ \vecentry{seed}{12},\ \vecentry{seed}{13},\ \vecentry{seed}{14}$ which are equals to the leaf seeds $\vecentry{ESEED}{0}$, $\vecentry{ESEED}{4},$ $\vecentry{ESEED}{5},$ $\vecentry{ESEED}{6},$ $\vecentry{ESEED}{7}$ respectively. From these seeds, we can compute the matrices $\widetilde{\mat{Q}_{0}},\ \widetilde{\mat{Q}_{4}},\ \widetilde{\mat{Q}_{5}},\ \widetilde{\mat{Q}_{6}},\ \widetilde{\mat{Q}_{7}}$. From the output of \texttt{LESS\_Sign} algorithm, we will get either $\widetilde{\mat{Q}_{j}}$ or $\mat{Q}_{\vecentry{d}{j}}^{T}\overline{\mat{Q}_{j}}$.}    
    \begin{figure}[h]
    \centering
    \includegraphics[width=.8\linewidth]{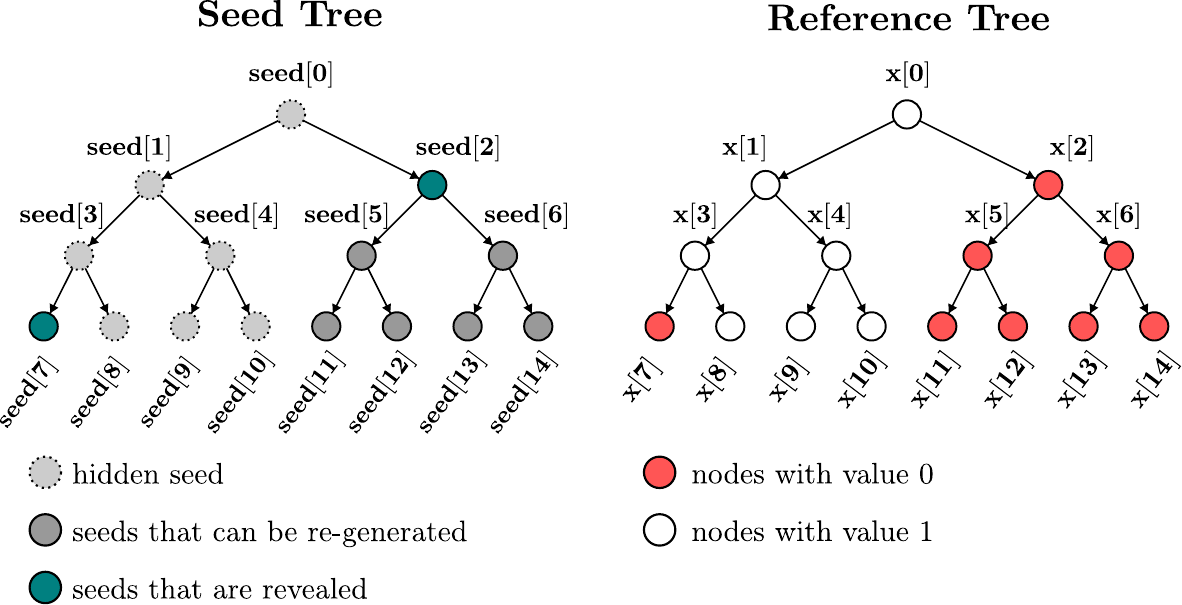}
    \caption{Example for extraction of $\vect{TreeNode}$ from Seed Tree using Reference Tree, following Alg.~\ref{alg:SeedTreePaths}} 
    \label{fig:NormalTree}  
    \end{figure}
\end{example}
As we can observe from Alg.~\ref{alg:SeedTreePaths}, the seeds from \textit{Seed Tree} that are revealed as $\vect{TreeNode}$ are directly associated with the values in $\vect{f}$ and the \textit{Reference Tree} $\vect{x}$. Therefore, we can try injecting faults in various locations of $\vect{f}$ or $\vect{x}$.

\subsubsection{\textbf{Modification of any node of the Reference Tree $\vect{x}$: }}\label{subsubsec:Modification of x} Here, we investigate the effect of modification of some fixed $i$-th value of the tree $\vect{x}$ in Alg.~\ref{alg:ComputeSeedToPublish}. Without loss of generality, assume $\vecentry{x}{i_{0}},\ \vecentry{x}{i_{1}},\ \cdots,\ \vecentry{x}{i_{r-1}}$ be the leaf nodes of the subtree with root node $\vecentry{x}{i}$, where $r\geq 1$. Now, suppose we {inject a fault} in the signature algorithm to modify the value of $i$-th node of the \textit{Reference Tree} $\vect{x}$. In that case, the signature algorithm will give us the faulted signature. However, even if we try to inject a fault in a physical machine, the fault can only occur with a certain probability. If we assume that the fault injection is successful, even then, there are several cases:
\begin{itemize}
    \item \textbf{Case 1: }The node $\vecentry{x}{i}$ is 0 in the non-faulted case. In this case, since the actual value $\vecentry{x}{i}$ is $0$, all the leaf nodes in the subtree rooted at $\vecentry{x}{i}$ must be zero. Hence, the vectors $\vect{f}$ and $\vect{d}$ do not have any non-zero value at the positions corresponding to the leaf nodes $\vecentry{x}{i_{0}},\ \vecentry{x}{i_{1}},\ \cdots,\ \vecentry{x}{i_{r-1}}$. Therefore, the $\vect{rsp}$ does not contain multiplication of any secret monomial matrix with the partial monomial matrix $\overline{\mat{Q}}_{i_{j}-2l+1}$, where $0\leq j<r$ \textit{i.e.}, we can not get any information about the secret matrices. 
    \item \textbf{Case 2: } The node $\vecentry{x}{i}$ is 1 in the non-faulted case, and after the fault injection, it has changed to 0. Since the \textit{Reference Tree} is updated in a bottom-up approach, the modification of the $i$-th node $\vecentry{x}{i}$ may affect the ancestors of $\vecentry{x}{i}$. Consequently, it may change the root node $\vecentry{x}{0}$. In this case, assume that it changes the value of the root node $\vecentry{x}{0}$ to 0. This case can occur only if all non-zero leaves fall under the subtree rooted at $\vecentry{x}{i}$. Since the value of the root node is zero, all the ancestors of $\vecentry{x}{i}$ including $\vecentry{x}{0}$ are zero. Therefore, neither $\vecentry{seed}{i}$ nor any of its ancestors in \textit{Seed Tree} is released because the Alg.~\ref{alg:SeedTreePaths} requires the parent of $\vecentry{x}{j}$ to be 1 if we want to release the $\vecentry{seed}{j}$, \textit{i.e.} such fault does not provide any advantage to us. Therefore, the nodes in the subtree rooted at $\vecentry{x}{i}$ do not affect the fault injection, so no extra information can be achieved from the released seeds corresponding to this subtree. 
    \begin{example}
  We consider the fixed digest vector $\vect{d}$, $\vect{f}$, and the \textit{Reference Tree} $\vect{x}$ of Example~\ref{example:1} in a non-faulted scenario. We modify the value of $\vecentry{x}{1}$ from $1\rightarrow 0$ that changes the value of $\vecentry{x}{0}$ from $1\rightarrow 0$. Fig.~\ref{fig:modify_x_Case2} represents the \textit{Reference Tree} and the related node of \textit{Seed Tree} in faulted case. In this case, only $\vecentry{x}{7}$ satisfies the condition "$\vecentry{x}{7}=0 \text{ and }\vecentry{x}{Parent(7)}=1$". Therefore, $\vect{TreeNode}$ will be $(\vecentry{seed}{7})$ and $\vect{rsp}=(\mat{Q}_{\vecentry{d}{1}}^{T}\overline{\mat{Q}}_{1},\ \mat{Q}_{\vecentry{d}{2}}^{T}\overline{\mat{Q}}_{2},\ \mat{Q}_{\vecentry{d}{3}}^{T}\overline{\mat{Q}}_{3})=(\mat{Q}_{3}^{T}\overline{\mat{Q}}_{1},\ \mat{Q}_{1}^{T}\overline{\mat{Q}}_{2},\ \mat{Q}_{1}^{T}\overline{\mat{Q}}_{3})$. 
  None of the monomial matrices $\widetilde{\mat{Q}}_{1},\ \widetilde{\mat{Q}}_{2},\ \widetilde{\mat{Q}}_{3}$ can be generated from $\vecentry{seed}{7}$. Therefore, we are unable to recover any secret key-related information from this faulted signature.
\end{example}
\begin{figure}[h]
\centering
\includegraphics[width=.8\linewidth]{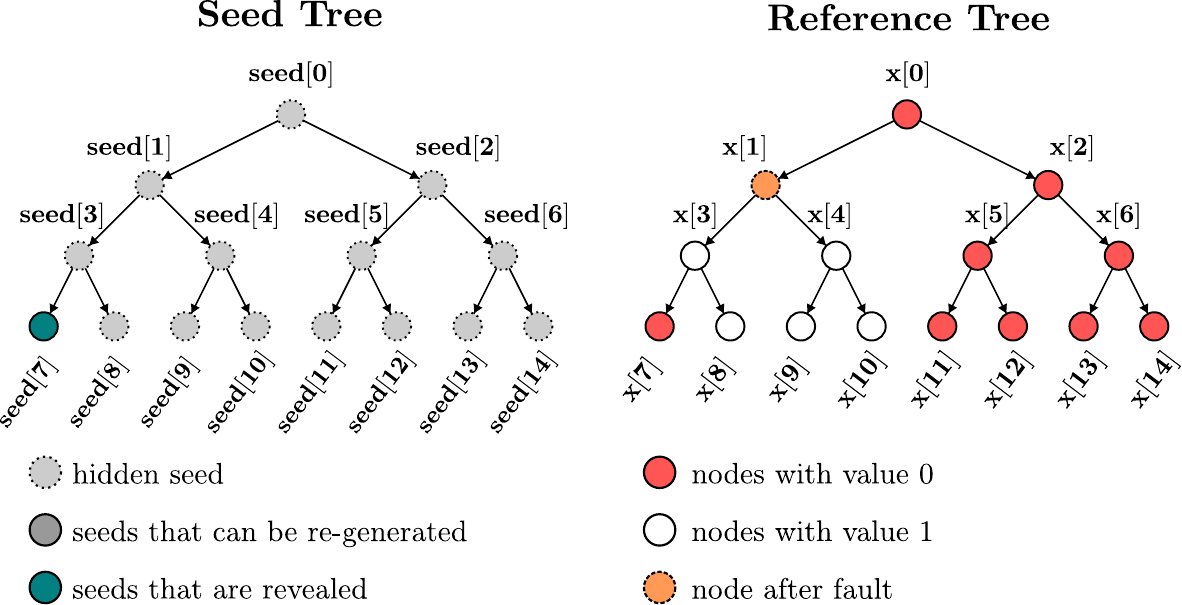}
\caption{Example of Case 2} 
\label{fig:modify_x_Case2}  
\end{figure}
     \item\textbf{Case 3: } The node $\vecentry{x}{i}$ is 1 in the non-faulted case. After the fault injection, it has changed to 0, but $\vecentry{x}{0}$ remains 1. Since the actual value of $\vecentry{x}{i}$ is $1$, there exists some leaf node $\vecentry{x}{i_{j}}$ such that $\vecentry{x}{i_{j}}=1$. Therefore, it follows that $\vecentry{f}{i_{j}-2l+1}$ is non-zero and consequently $\vecentry{d}{i_{j}-2l+1}$ is also non-zero. Without loss of generality, assume that $i_{j}-2l+1=k'$ then $\vect{rsp}$ contains $\mat{Q}_{\vecentry{d}{k'}}^{T}\overline{\mat{Q}}_{k'}$. Also, since the faulted value of $\vecentry{x}{i}$ is $0$ and $\vecentry{x}{0}=1$, so $\vect{TreeNode}$ will contain either $\vecentry{seed}{i}$ or any of its ancestors in \textit{Seed Tree} from which we can generate the leaf node $\vecentry{seed}{i_{j}}$ of the subtree rooted at $\vecentry{seed}{i}$. Hence, the ephemeral key $\vecentry{ESEED}{k'}$ and consequently the monomial matrix $\widetilde{\mat{Q}}_{k'}$ can be generated. Therefore, we retrieve the pair $(\widetilde{\mat{Q}}_{k'},\ \mat{Q}_{\vecentry{d}{k'}}^{T}\overline{\mat{Q}}_{k'})\,.$
 \begin{example}
  We consider the fixed digest vector $\vect{d}$, $\vect{f}$, and the \textit{Reference Tree} $\vect{x}$ of Example~\ref{example:1} in a non-faulted scenario. We modify the value of $\vecentry{x}{3}$ from $1\rightarrow 0$ that does not change the value of $\vecentry{x}{0}$. Fig.~\ref{fig:modify_x_Case3} represents the \textit{Reference Tree} and the related node of \textit{Seed Tree} in the faulted case. From this Fig.~\ref{fig:modify_x_Case3} we can see that for $i=2,\ 3$ the condition "$\vecentry{x}{i}=0$ and $\vecentry{x}{Parent(i)}=1$" is satisfied. Therefore, $\vect{TreeNode}$ will be $(\vecentry{seed}{2},\ \vecentry{seed}{3})$ and $\vect{rsp}=(\mat{Q}_{3}^{T}\overline{\mat{Q}}_{1},\ \mat{Q}_{1}^{T}\overline{\mat{Q}}_{2},\ \mat{Q}_{1}^{T}\overline{\mat{Q}}_{3})$. Now $\vecentry{seed}{3}$ is contained in the signature component \textit{Reference Tree} that we can generate the seed $\vecentry{seed}{8}$   to the $8-2l+1=8-8+1=1$-st monomial matrix $\widetilde{\mat{Q}}_{1}$. So, from this faulted signature, we found the pair $(\widetilde{\mat{Q}}_{1},\ \mat{Q}_{3}^{T}\overline{\mat{Q}}_{1})$ that help us find the information of the matrix $\mat{Q}_{3}^{T}$.
\end{example}
\begin{figure}[h]
\centering
\includegraphics[width=.8\linewidth]{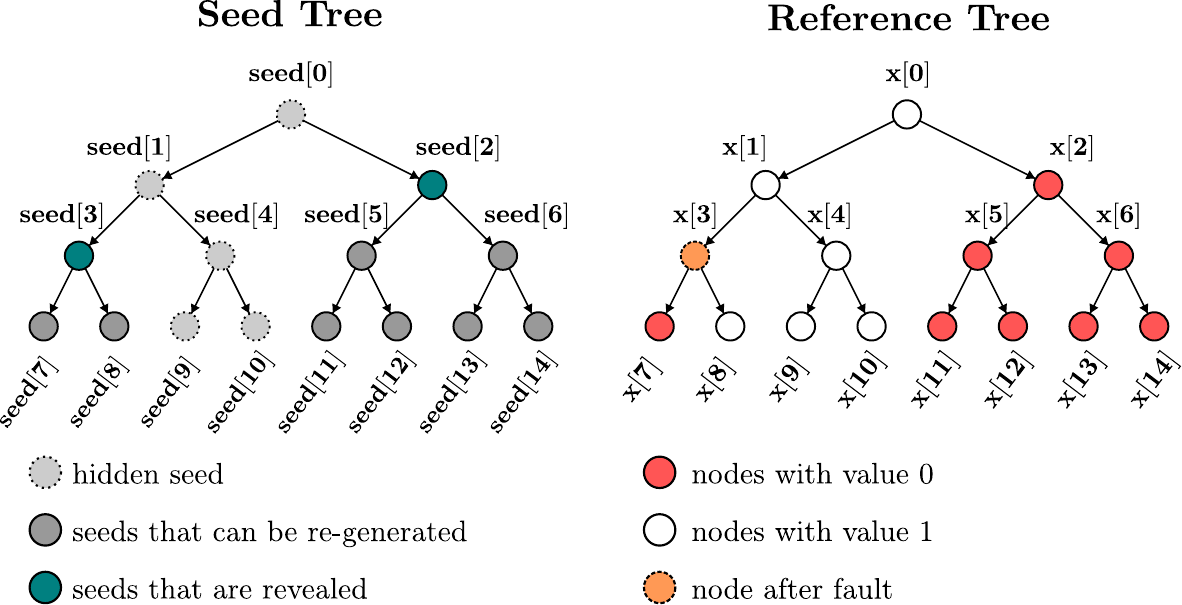}
\caption{Example of Case 3} 
\label{fig:modify_x_Case3}  
\end{figure}
\end{itemize}

\subsubsection{\textbf{Modification of the vector $\vect{f}$:} }
The vector $\vect{f}$ is computed by using the fixed digest vector $\vect{d}$. If the $i$-th element of $\vect{d}$ holds a non-zero value, $\vecentry{f}{i}$ is assigned the value of $1$; otherwise, it is set to zero. If we modify the $i$-th value $\vecentry{f}{i}$ by injecting fault, then the $(2l-1+i)$-th leaf node $\vecentry{x}{2l-1+i}$ of the \textit{Reference Tree} will be changed. The effect of this fault will be the same as the above modification of any leaf node of the Reference Tree $\vect{x}$. 

From the above, we can observe that the attack surfaces are different as in the second attack component, we change the value of any $\vecentry{f}{i}$, and in the first attack component, we change the value of any $\vecentry{x}{i}$. However, we can say that modifying the value of any value $\vecentry{f}{i}$ is imposing the same effect as modifying the corresponding leaf node of $\vecentry{x}{2l-1+i}$. Therefore, from now onwards, we only discuss the modification of any node of the \textit{Reference Tree} $ \vect{x}$.


\subsection{Fault models}\label{sec:FaultModel}
In this section, we describe the fault models that will help us to recover the secret key. {Our attack just requires changing a bit ($1\rightarrow0$ for LESS). Here we discussed in detailed how each fault model can be utilize to realize our attack}. 
Mainly, our fault assumptions can be realized by "skipping one condition check" or "forcing one data corruption in $\vect{f}$ or $\vect{x}$". We assume that the faulted location is arbitrary but known to the attacker. 
\subsubsection{{Skip the validity check condition in Alg.~\ref{alg:SeedTreePaths}: }} If we skip the check "$\vecentry{x}{i}=0\text{ and }\vecentry{x}{Parent(i)}=1$" in Alg.~\ref{alg:SeedTreePaths} for a fixed $i$, then $\vecentry{seed}{i}$ will always contained in $\vect{TreeNode}$. Without loss of generality, let $\vecentry{seed}{i_{0}},\ \cdots,\ \vecentry{seed}{i_{r-1}}$ be the leaf nodes of the subtree rooted the node $\vecentry{seed}{i}$ of $\textit{Seed Tree}$ and $\vecentry{x}{i_{0}},\ \cdots,\ \vecentry{x}{i_{r-1}}$ be the corresponding leaf nodes of the subtree rooted the node $\vecentry{x}{i}$ of the \textit{Reference Tree}. If $\vecentry{x}{i}=1$, then there exists the leaf node say $\vecentry{x}{i_{j'}}$, where $0\leq {j'}<r$ and $\vecentry{x}{i_{j'}}=1$. This implies $\vecentry{f}{i_{j'}-2l+1}=1$ and so, $\vecentry{d}{i_{j'}-2l+1}$ must be non-zero. Therefore, $\vect{rsp}$ must contain the matrix multiplication $\mat{Q}_{\vecentry{d}{i_{j'}-2l+1}}^{T}\overline{\mat{Q}}_{i_{j'}-2l+1}$. Since $\vecentry{d}{i_{j'}-2l+1}$ is non-zero, we are not supposed to have the information about the ephemeral matrix $\widetilde{\mat{Q}}_{i_{j'}-2l+1}$ in non-faulted case. However, from the $\vecentry{seed}{i}$, we can generate all leaf nodes of the subtree rooted in this node. Now, $\vecentry{seed}{i_{j'}}$ is the $i_{j'}-2l+1$-th leaf node $\vecentry{ESEED}{i_{j'}-2l+1}$ of the \textit{Seed Tree}, and that helps us find the ephemeral monomial matrix $\widetilde{\mat{Q}}_{i_{j'}-2l+1}$. So, in this case, we can find some information of secret monomial matrix $\mat{Q}_{\vecentry{d}{i_j'-2l+1}}^{T}$ from the pair $(\widetilde{\mat{Q}}_{i_{j'}-2l+1},\ \mat{Q}_{\vecentry{d}{i_j'-2l+1}}^{T}\overline{\mat{Q}}_{i_{j'}-2l+1})$. This is a model that we can use to mount the attack.

Previously, many works~\cite{Fault_Attacks_on_CCA-secure_Lattice_KEMs,DBLP:journals/tches/BruinderinkP18} have shown that instruction skips can be easily done with clock glitches, and the fault happens with very high probability. Mainly, in these works they have skipped the condition check instructions and store instructions. Recently, Keita et al. in~\cite{Fault-Injection_Attacks_Against_NISTs_Post-Quantum_Cryptography_Round_3_KEM_Candidates} bypassed the validity check in the decapsulation procedure in post-quantum the key-encapsulation mechanism Kyber~\cite{kyber_specification}. However, one may argue that skipping the validity check is the most important part as if we can skip this validity check for $i=0$ in line-6 in Alg.~\ref{alg:SeedTreePaths}, then $\vecentry{seed}{0}$ will be revealed. Henceforth all $\widetilde{\mat{Q}}_i$'s would have been revealed. Therefore, one may want to protect this checking at any cost. In fact, the need to protect this validity check was previously noted by Oder et al.~\cite{Practical_CCA2-Secure_and_Masked_Ring-LWE_Implementation} for different post-quantum schemes (e.g. Kyber). Nevertheless, the data in $\vect{d}$ and \textit{Reference Tree} $\vect{x}$ can be corrupted by skipping the storing instruction and forcing the data not to change. 

\noindent\textbf{Skip the store instruction in Alg.~\ref{alg:ComputeSeedToPublish} to corrupt $\vect{x}$: }
All the nodes of the \textit{Reference Tree} $\vect{x}$ are initialized by zero. If we can skip the store instruction $\vecentry{x}{i}=\vecentry{x}{2i+1}\vee\vecentry{x}{2i+2}$ for any $i$, then vaule of $\vecentry{x}{i}$ will remain zero. However, if the value of $\vecentry{x}{i}$ is supposed to be 1 in the non-faulted case, then $\vecentry{x}{i}$ will be modified after injecting this store instruction fault. 
As we discussed earlier, we can find the information of the secret matrix if the fault changes the value of $\vecentry{x}{i}$ from $1$ to $ 0$ and $\vecentry{x}{0}$ remains $1$. A similar instruction skipping attack has been shown in \cite{Fault_Attacks_on_CCA-secure_Lattice_KEMs}. 

\noindent\textbf{Stuck-at-zero fault model to corrupt $\vect{x}$: }
 A possible attack avenue is exploiting effective faults in the stuck-at model, where an attacker can try to alter the $i$-th intermediate value $\vecentry{x}{i}$ to a particular known value, e.g., to zero using stuck-at-zero fault~\cite{Secret_External_Encodings_Do_Not_Prevent_Transient_Fault_Analysis,DBLP:journals/iacr/GenetKPM18,DBLP:journals/iacr/KunduCSKMV23} using voltage glitches or electromagnetic attacks. The effect of this fault is equivalent to the above "store instruction skipping" fault. So, this fault will allow us to find the secret matrix.
 
 \noindent\textbf{Rowhammer attack model to corrupt $\vect{x}$: }Rowhammer \cite{A_new_approach_for_rowhammer_attacks} is a hardware bug identified in DRAMS (dynamic random access memory), where repeated row activations can cause bitflips in adjacent rows. This can also be a possible attack where bitflips ($1\to 0$) can be employed to corrupt the data $\vecentry{x}{i}$. Recently, such an attack on Kyber using rowhammer has been shown in~\cite{A_practical_key-recovery_attack_on_LWE-based_key-encapsulation_mechanism_schemes_using_Rowhammer}.\\

  As we discussed, any one of the above fault models can generate effective faulted signatures. However, the definition of successful fault always depends on the fault model. For example, if we work on the first fault model, $i.e.,$ "Skip the validity check condition in Alg.~\ref{alg:SeedTreePaths}", then the successful-fault will be: successfully skipped the checking condition "$\vecentry{x}{i}=0\text{ and }\vecentry{x}{Parent(i)}=1$" for a known $i$. But, if we work on the second fault model, then the successful fault will be: successfully skipped the store instruction "$\vecentry{x}{i}=\vecentry{x}{2i+1}\vee\vecentry{x}{2i+2}$" for a known $i$. 

From now on, we will discuss the second fault model to inject a fault, $i.e.,$ we inject a fault to skip the $i$-th store instruction 
$Ins(i):$ "$\vecentry{x}{i}=\vecentry{x}{2i+1}\vee\vecentry{x}{2i+2}$" for a fixed known $i$. Since all the values of the \textit{Reference Tree} $\vect{x}$ are initialized by zero, therefore for each successful fault, the value of $\vecentry{x}{i}$ will always be zero, where the position of fault location $\vecentry{x}{i}$ is known to the attacker. In the practical setup of this store instruction skip fault model, the following cases may arise: 
\begin{itemize}
    \item Successfully skipped the instruction $Ins(i)$, for the known $i$ and outputs the signature we call it a successful faulted signature. This fault could be an effective or ineffective fault.
    \item Could not skip the instruction $Ins(i)$, for the known fixed $i$. In this case, we call the output signature an unsuccessful faulted signature. 
\end{itemize}
In a physical device, faults can be induced with varying success rates. Even if there is a successful fault, the resulting faulty signature may or may not provide information about the secret key, as we observed earlier. A successful fault is called "effective" if it reveals secret key information and "ineffective" if it does not. More explicitly, we will say that a successful fault is effective if the fault changes the value of $\vecentry{x}{i}$ from $1$ to $0$, but the value of the root node $\vecentry{x}{0}$ remains unchanged, (i.e., $1$). Otherwise, the fault will be ineffective. We must identify the effective faulted signature from the received signature to find the errorless secret matrix. In the next Section~\ref{sec:Faultdetection}, we will discuss the effective faulted signature detection method.

\subsection{Effective fault detection}\label{sec:Faultdetection}
Let $\tau' = (\textit{salt},\ \textit{cmt},\ \vect{TreeNode'},\ \vect{rsp})$ be the received signature corresponding to the message $m$. We need to detect if the signature is generated from an "effective" or "ineffective" fault. The injected fault only affects the \textit{Reference Tree} $\vect{x}$, so the signature components $\textit{salt}$, $\textit{cmt}$ and $\vect{rsp}$ remain the same with the corresponding non-faulted signature components. We can compute the fixed digest vector $\vect{d}$ corresponding to the signature $\tau'$ from $\textit{cmt}$. From the fixed digest vector $\vect{d}$, we can compute the vector $\vect{f}$ and the \textit{Reference Tree} $\vect{x}$ for the non-faulted case. However, we can compute the successful faulted \textit{Reference Tree} $\vect{x'}$ from the \textit{Reference Tree} $\vect{x}$ by assigning the value of $\vecentry{x'}{i}=0$ and updating the ancestors of $\vecentry{x'}{i}$ accordingly, $i.e.,$ $\vect{x'}$ should be the \textit{Reference Tree} if the instruction $Ins(i)$ is skipped. Then, we will distinguish the "effective" faulted signature and "ineffective" faulted signature with the following process:
\begin{itemize}
    \item \textbf{Step-1:} First, we will check whether $\vecentry{x}{i}=0$ or not. If $\vecentry{x}{i}=0$, then this is already a case of "ineffective fault", and we reject the signature. Otherwise, we will go to the next step.
    \item \textbf{Step-2:} Next, we will check whether $\vecentry{x'}{0}=0$ or not. If $\vecentry{x'}{0}=0$, then this is a case of "ineffective fault", and we reject the signature. Otherwise, from the \textit{Reference Tree} $\vect{x'}$, we compute the size of successfully faulted $\vect{TreeNode'}$, say $\Delta_{exp}$ and the size of received $\vect{TreeNode}'$ say $\Delta_{rec}$. We compare the values $\Delta_{exp}$ with $\Delta_{rec}$. 

    \item \textbf{Step-3:} If $\Delta_{rec} \neq \Delta_{exp}$, then the fault is unsuccessful, and we reject the signature. Otherwise, using $salt$, $\vect{TreeNode}'$ and $\vect{x}'$ we compute all the $\widetilde{\mat{Q}}_j$ where $\vecentry{d}{j}=0$. We apply the verification using these $\widetilde{\mat{Q}}_j$'s and $\vec{rsp}$. If the verification is successful, then we take the received signature as an effective faulted signature. Otherwise, we reject the signature.
    
    
\end{itemize}
{Note that, in \textbf{Step-3} of the above process, $\vecentry{x}{i}$ is changed from $1\to 0$ and $\vecentry{x}{0}=1$, but we still consider it as "unsuccessful" fault. This is because we want our fault detection method to detect whether our fault has been successfully injected exactly at the $i$-th location or not. If $\vecentry{x}{i}$ changed from $1\to 0$, then there are two cases.} 
\begin{itemize}
    \item \textit{Case-1: } {fault was successfully injected at $i$-th location.}
    \item \textit{Case-2: } {fault was injected at a $j$-th location for $j\neq i$ and it has changed $\vecentry{x}{i}$.}
\end{itemize}
{ We only consider \textit{Case-1} as “successful-fault”, but not \textit{Case-2} as the fault is not injected at the $i$-th location in that case.} In this procedure, we can detect that the faulted signature that is generated by successfully skipping the instruction 
$Ins(i)$ and that leaks the information about the secret matrix. Note that the targeted faulted location $i$ is arbitrary but known to the attacker. For simplicity, we will fix the targeted fault position $i$. We will check whether this fixed $i$-th store instruction $Ins(i)$ skipped and that leaks the information about the secret matrix or not. If this detection method passes, then we will use this signature. Otherwise, we will again query for another signature. 

\subsection{Attack template}\label{sec:attackTemplate}
In this section, first, we will describe how to obtain the secret monomial matrices from an effective faulted signature $\tau=(salt,\ cmt,\ \vect{Tree Node},\ \vect{rsp})$ in Alg.~\ref{alg:RecoverSecret}. 
Let $\vecentry{x}{i}$ be the node in \textit{Reference Tree} with height $h$, and $L_{\vecentry{x}{i}}$ be the set of all leaf nodes of the subtree rooted at $\vecentry{x}{i}$. We only need the leaf nodes from $L_{\vecentry{x}{i}}$ that coincide with the first $t$ (the length of the digest $\vect{d}$) many leaf nodes of the full \textit{Reference Tree}. Without loss of generality, assume that there are $v$ many such leaves, and let the set of indices of these leaves be 
$I^{(i)}_{\text{leaf}} = \{j_1,\ j_2,\ \cdots,\ j_v\}\,.$
From this effective faulted signature, all the secret matrices $\mat{Q}_{\vecentry{d}{j-2l+1}}^{T}$  will be recovered with Alg.~\ref{alg:RecoverSecret}, where $j\in I_{\text{leaf}}^{(i)}$ and $\vecentry{d}{j-2l+1}\neq 0$.
\begin{algorithm}
\caption{\texttt{Recover\_Secret\_Matrices}($\tau,\ PK$)}\label{alg:RecoverSecret}
\begin{algorithmic}[1]
\Require Signature $\tau=(\textit{salt},\ \textit{cmt},\ \vect{TreeNode},\ \vect{rsp})$, public key $PK=(gseed,\ \mat{G}_{1},\ \cdots,\ \mat{G}_{s-1})$.
\Ensure The columns of secret matrices $\mat{Q}_{\vecentry{d}{j-2l+1}}^{T}$, where $j\in I_{\text{leaf}}^{(i)}$ and $\vecentry{d}{j-2l+1}\neq 0$.

\State $\vect{d}\leftarrow \texttt{CSPRNG}(cmt,\ \mathbb{S}_{t,w})$
\State $\vect{seed}\leftarrow \texttt{SeedTreeUpdate}(\vect{TreeNode},\ salt, \ \vect{d})$
\State $\vect{ESEED}\leftarrow $ Leaf nodes of $\vect{seed}$ corresponding to $\vecentry{seed}{i}$
\State $\mat{G}_{0}\leftarrow \texttt{CSPRNG}(gseed,\ \mathbb{S}_{\texttt{RREF}})$

\For{$r=1;\ r\leq v;\ r=r+1$}
    \If{$\vecentry{d}{j_{r}-2l+1}\neq 0$ {\textbf{and} $\mat{Q}_{\vecentry{d}{j_{r}-2l+1}}$ is not recovered}}
       \State $\widetilde{\mat{Q}}_{j_{r}-2l+1}\leftarrow \texttt{CSPRNG}(\vecentry{ESEED}{j_{r}-2l+1},\ {M}_{n}(q))$
       \State $(\overline{\mat{Q}}_{j_{r}-2l+1},\ \overline{\mat{V}}_{j_{r}-2l+1})\leftarrow \texttt{PrepareDigestInput}(\mat{G}_{0},\ \widetilde{\mat{Q}}_{j_{r}-2l+1})$
       \State $\mat{Q}^{*}=\mat{Q}_{\vecentry{d}{j_{r}-2l+1}}^{T}\overline{\mat{Q}}_{j_{r}-2l+1}\leftarrow \texttt{ExpandToMonomAction}(\vect{rsp})$ 
       \State Compute $\mat{Q}_{\vecentry{d}{j_{r}-2l+1}}^{T}~\text{from}~(\mat{Q}^{*},\ \overline{\mat{Q}}_{j_{r}-2l+1})$ \Comment{following Section~\ref{subsec:Observation}}
        \EndIf
\EndFor
\end{algorithmic}
\end{algorithm}

Here, \texttt{SeedTreeUpdate} function takes the $\vect{TreeNode}$, \textit{salt} and digest $\vect{d}$ and generates all the ephemeral seeds assuming the modified \textit{Reference Tree} after effective fault. Since $\vecentry{seed}{i}$ is revealed in $\vect{TreeNode}$ after effective fault, we can say that $\vecentry{ESEED}{j-2l+1}$ for all $j\in I_{\text{leaf}}^{(i)}$ are revealed.

In this attack model, we are able to get into the victim's device and introduce the fault that causes it to bypass the $Ins(i)$ instruction. The \texttt{LESS\_KeyGen} (Alg.~\ref{alg:KeyGen}) is a one-time operation from where the secret key $\text{SK}=( MSSED,\ gseed)$ and public key $\text{PK}=(gseed,\ \mat{G}_{1},\ \cdots,\ \mat{G}_{s-1})$ are generated. But with this private key $\text{SK}$, the \texttt{LESS\_Sign} (Alg.~\ref{alg:Signature}) can execute more than once. We follow the following subsequent actions to find the secret monomial matrices:
\begin{itemize}
    \item \textbf{Step-1:} We generate a message, signature pair $(m,\ \tau)$ from the victim device.
    \item \textbf{Step-2:} After receiving the pair $(m,\ \tau)$, we will determine whether or not $\tau$ is an effective faulted signature. Go back to \textbf{Step-1} if the signature is not effective. If yes, then go to \textbf{Step-3}.
    \item \textbf{Step-3:} Using this signature $\tau$, we will run the \texttt{Recover\_Secret\_Matrices} algorithm (Alg.~\ref{alg:RecoverSecret}) to determine the hidden monomial matrices.
    \item \textbf{Step-4:} Next, we will calculate whether or not the whole secret monomial matrices were obtained. We terminate the process if the secret matrices are recovered. Otherwise, we repeat the same procedure to obtain the remaining non-recovered columns. 
\end{itemize}

\subsection{{Secret recovery from single fault}}\label{subsec:RecoveryColCalculation}
{In this section, we calculate the expected number of secret monomials recovered from one effective faulted signature where the fault is injected at a node $\vecentry{x}{i}$ ($0\leq i \leq 4l-2$). Now, if there are $m$ many non-zero leaves with distinct values in the subtree rooted at $\vecentry{x}{i}$, then we will get exactly $m$ many pairs of the form $(\widetilde{\mat{Q}}_{{j}},\ \mat{Q}_{\vecentry{d}{j}}^{T}\overline{\mat{Q}}_{{j}})$ $i.e.$ we recover $m$ many secret monomials. In this section, first, we will estimate the value of $m$. 

Suppose $L_{\vecentry{x}{i}}$ the set of leaf nodes in the subtree rooted at $\vecentry{x}{i}$ and let $|L_{\vecentry{x}{i}}|=\ell$. Let $W$ be the random variable representing the number of leaf nodes in $L_{\vecentry{x}{i}}$ with non-zero value. $X$ be a random variable that represents the number of distinct non-zero values of the leaf nodes in $L_{\vecentry{x}{i}}$. Then for any $0\leq m\leq s-1$, we have
}
\begin{align*}
    \Pr[X=m] = \sum_{r=m}^w\Pr[X=m~|~W=r]\cdot\Pr[W=r]
\end{align*}
{
Where $w$ is the weight of $\vect{d}$ and therefore $L_{\vecentry{x}{i}}$ can only have at most $w$ many non-zero valued leaf nodes. First, we will calculate $\Pr[X=m~|~W=r]$, which is the probability that $r$ many non-zero leaves take exactly $m$ many distinct values. These $m$ distinct values can be chosen from $(s-1)$ possible values in $\binom{s-1}{m}$ ways. Now, we have to assign all these $m$ values to the $r$ many leaf nodes. We first partition the $r$ locations into $m$ many non-empty subsets, which can be done in $S(r,\ m)$ many ways. This $S(r,\ m)$ is a \textit{Stirling number of the second kind} {\cite{On_stirling_numbers_of_the_second_kind}}. Now, each of the $m$ many subsets can be assigned a unique non-zero value, which can be done in $m!$ ways. So, the $r$ many leaf nodes can be assigned $m$ distinct value in $m!\binom{s-1}{m}S(r,\ m)$ ways. Therefore
}
\begin{align*}
    \Pr[X=m~|~W=r] =\frac{m!\binom{s-1}{m}S(r,\ m)}{(s-1)^r} 
\end{align*}
{Now, $\vect{d}$ has weight $w$ and $\Pr[W = r]$ is the probability that the $\ell$ many
locations of $\vect{d}$ corresponding to the leaf nodes in $L_{\vecentry{x}{i}}$ has exactly $r$ many non-zero values and the last $t-\ell$ many locations has $w-r$ many non-zero values. Therefore,}
\begin{align*}
    \Pr[W=r]=\frac{\binom{\ell}{r}\binom{t-\ell}{w-r}}{\binom{t}{w}}
\end{align*}
{Now we can calculate $\Pr[X=m]$ for all $0\leq m\leq s-1$. However, we are interested in finding the expected number of secret monomials with one single fault, which is the expectation of the random variable $X$.
}
\begin{align*}
    \Exp{X}&=\sum_{m=1}^{s-1}m\cdot\Pr[X=m]\\
    &=\sum_{m=1}^{s-1}m\left(\sum_{r=m}^{w}\frac{m!\binom{s-1}{m}S(r,\ m)}{(s-1)^r}\cdot\frac{\binom{\ell}{r}\binom{t-\ell}{w-r}}{\binom{t}{w}} \right)
\end{align*}
{With only one single faulted signature the expected number of secret monomials that we recover is $\Exp{X}$ but the total number of secret monomials is $(s-1)$. Therefore, we need multiple faulted signatures to recover all the secret monomials.
}

\section{Extending our attack to CROSS }\label{sec:CROSS}
\begin{algorithm}[!ht]
\caption{\texttt{CROSS\_Sign}
($Msg,\ \vect{e}$)}\label{alg:CROSS}
\begin{algorithmic}[1]
\Require Secret key $\vect{e}\in G$ and message $Msg$ where $G\subset \mathbb{E}^{n}$, $\mathbf{H}\in\mathbb{F}_{p}^{(n-k)\times n}$ are public key satisfying $\vect{s}=\vect{e}\mat{H}^{T}$ 
\Ensure Signature $\tau=\left\{Salt,\ c_{0},\ c_{1},\ h,\ \vect{SeedPath},\ \vect{MerkleProofs},\ \left\{f^{(i)}\right\}_{i\notin J}\right\}$
\State Sample $MSeed\xleftarrow[]{\$}\left\{0,\ 1\right\}^{\lambda},\ Salt\xleftarrow[]{\$}\left\{0,\ 1\right\}^{2\lambda}$ 
\State Generate $\vect{Seed}=\texttt{SeedTree}(MSeed,\ Salt)$ 
\State $\vecentry{ESEED}{1},\ \cdots,\ \vecentry{ESEED}{t}=$ Leaf nodes of $\vect{Seed}$ 
\For{$i=1,\ i\leq t,\ i=i+1$} 
\State Sample $(Seed^{(\vect{u'})},\ Seed^{(\vect{v'})})\xleftarrow[]{\vecentry{ESEED}{i}}\left\{0,\ 1\right\}^{2\lambda}$ 
\State Sample $\vect{u'}^{(i)}\xleftarrow[]{Seed^{(\vect{u'})}}\field{F}{p}^{n}$, $\vect{e'}^{(i)}\xleftarrow[]{Seed^{(\vect{e'})}}G$
\State Compute $\sigma^{(i)}\in G$ such that $\sigma^{(i)}(\vect{e'}^{(i)})=\vect{e}$
\State Set $\vect{u}^{(i)}=\sigma^{(i)}(\vect{u'}^{(i)})$
\State Compute $\widetilde{\vect{s}}^{(i)}=\vect{u}^{(i)}\mat{H}^{T}$
\State Set $c_{0}^{(i)}=\texttt{Hash}(\widetilde{\vect{s}}^{(i)},\ \sigma^{(i)},\ Salt,\ i)$
\State Set $c_{1}^{(i)}=\texttt{Hash}(\vect{u'}^{(i)},\ \vect{e'}^{(i)},\ Salt,\ i)$
\EndFor

\State Set $\mathcal{T}=\texttt{Merkle Tree}(c_{0}^{(1)},\ \cdots,\ c_{0}^{(t)})$ 
\State Compute $c_{0}=\mathcal{T}.\texttt{Root}()$ 
\State Compute $c_{1}=\texttt{Hash}(c_{1}^{(1)},\ \cdots,\ c_{1}^{(t)})$
\State Generate $(\beta^{(1)},\ \cdots,\ \beta^{(t)})=\texttt{GenCh}_{1}(c_{0},\ c_{1},\ Msg,\ Salt)$ 
\For{$i=1,\ i\leq t,\ i=i+1$} 
\State Compute $\vect{y}^{(i)}=\vect{u'}^{(i)}+\beta^{(i)}\vect{e'}^{(i)}$
\State Compute $h^{(i)}=\texttt{Hash}(\vect{y}^{(i)})$
\EndFor
\State Compute $h=\texttt{Hash}(h^{(1)},\ \cdots,\ h^{(t)})$
\State Generate $\left(\vecentry{b}{1},\ \cdots,\ \vecentry{b}{t}\right)=\texttt{GenCh}_{2}(c_{0},\ c_{1},\ \beta^{(1)},\ \cdots,\ \beta^{(t)},\ h,\ Msg,\ Salt)$
\State Set $J=\left\{i:~ \vecentry{b}{i}=1\right\}$
\State Set $\vect{SeedPath}=\texttt{publish\_seeds}(MSeed,\ Salt,\ J)$
\For{$i\notin J$}
\State $f^{(i)}:=(\vect{y}^{(i)},\ \sigma^{(i)},\ c_{1}^{(i)})$
\EndFor
\State Compute $\vect{MerkleProofs}=\mathcal{T}.\texttt{Proofs}(\left\{1,\ \cdots,\ t\right\}\setminus J)$
\State Return $\tau=\left\{Salt,\ c_{0},\ c_{1},\ h,\ \vect{SeedPath},\ \vect{MerkleProofs},\ \left\{f^{(i)}\right\}_{i\notin J}\right\}$ 
\end{algorithmic}
\end{algorithm}

\noindent CROSS uses $\mathbb{E}^{n}$ {a commutative group isomorphic to} $(\field{F}{z}^{n},\ +)$, where $n,\ z$ are parameters of the signature and $G$ is a subgroup of $\mathbb{E}^{n}$. Here $\vect{e}$ is a long-term secret vector which is used {to generate signatures}. Therefore, the attacker can generate multiple valid signatures using the secret $\vect{e}$ information. Similar to the attack on LESS, the target here is to find the information of the secret vector $\vect{e}$. In the Alg.~\ref{alg:CROSS}, we can observe that if we have information of one single pair $(\vect{e'}^{(i)},\ f^{(i)}=(\vect{y}^{(i)},\ \sigma^{(i)},\ c_{1}^{(i)}))$, then we can compute the secret $\vect{e}$ by $\vect{e}=\sigma^{(i)}(\vect{e'}^{(i)})$. Therefore, we aim to find one such pair corresponding to any $i$ for full key recovery. 

\begin{algorithm}[!ht]
\caption{\texttt{Recover\_Secret\_CROSS}
($\tau,\ PK$)}\label{alg:recover-secret-cross}
\begin{algorithmic}[1]
\Require $\tau=\left\{Salt,\ c_{0},\ c_{1},\ h,\ \vect{SeedPath},\ \vect{MerkleProofs},\ \left\{(\vect{y}^{(i)},\ \sigma^{(i)},\ c_{1}^{(i)})\right\}_{i\notin J}\right\}$
\Ensure The secret vector $\vect{e}$.
\State Generate $(\beta^{(1)},\ \cdots,\ \beta^{(t)})=\texttt{GenCh}_{1}(c_{0},\ c_{1},\ Msg,\ Salt)$
\State Generate $\left(\vecentry{b}{1},\ \cdots,\ \vecentry{b}{t}\right)=\texttt{GenCh}_{2}(c_{0},\ c_{1},\ \beta^{(1)},\ \cdots,\ \beta^{(t)},\ h,\ Msg,\ Salt)$
\State Set $J=\left\{i:~ \vecentry{b}{i}=1\right\}$
\State {$\vecentry{ESEED}{j_{1}-2l},\ \cdots,\ \vecentry{ESEED}{j_{v}-2l}\leftarrow \texttt{SeedTreeUpdate}(\vecentry{seed}{i},\ Salt, \ \vect{b})$}
 \For{{$i=j_{1}-2l,\ \cdots,\ j_{v}-2l:$}}
 \If{$i\notin J$}
\State Sample $(Seed^{(\vect{u'})},\ Seed^{(\vect{v'})})\xleftarrow[]{\vecentry{ESEED}{i}}\left\{0,\ 1\right\}^{2\lambda}$ 
\State Sample $\vect{e'}^{(i)}\xleftarrow[]{Seed^{(\vect{e'})}}G$
\State Compute $\vect{e}=\sigma^{(i)}(\vect{e'}^{(i)})$
\State\Return $\vect{e}$
\EndIf
\EndFor
\end{algorithmic}
\end{algorithm}

In Alg.~\ref{alg:CROSS}, the function \texttt{publish\_seeds} (line 22) works equivalent to the function \texttt{SeedTreePaths}~(Alg.~\ref{alg:SeedTreePaths}) used in LESS signature. Using the digest vector $\vect{b}$, the function \texttt{publish\_seeds} first creates a \textit{Reference Tree} say $\vect{y}$ in a bottom-up approach like LESS signature. The only difference in this \texttt{Reference Tree} $\vect{y}$ is that the flag of the published seed is defined as $1$ and unpublished seed notation as $0$, whereas in LESS, the authors define the opposite. However, Both use equivalent concepts. Like LESS, we require the modification of any node of the \textit{Reference Tree} from $1$(\text{ flag of the unpublished seed }) $\rightarrow 0$(\text{ flag of the published seed }) to get an effective faulted signature. Therefore, we need the modification from $0\to 1$ to get an effective faulted signature. We can detect the effective faulted signature here using a similar technique that we used in Section~\ref{sec:Faultdetection} to detect effective fault for LESS signature. 

Let us assume that we apply fault injection to the CROSS signature of a victim's device such that the value of {$\vecentry{y}{i}$} has been changed from $0\to 1$. Consider an effective faulted signature as
$\tau=\left\{Salt,\ c_{0},\ c_{1},\ h,\ \vect{SeedPath},\ \vect{MerkleProofs},\ \left\{f^{(i)}\right\}_{i\notin J}\right\}\,.$
{Since $\tau$ is an effective-faulted signature, therefore we will get the seed $\vecentry{seed}{i}$. All the leaf nodes of the subtree say $L_{\vecentry{x}{i}}=\left\{\vecentry{x}{j_{1}},\ \cdots,\ \vecentry{x}{j_{v}}\right\}$ rooted as $\vecentry{seed}{i}$ can be computed from $\vecentry{seed}{i}$. i.e., $\vecentry{ESEED}{j_{1}-2l},\ \cdots,\ \vecentry{ESEED}{j_{v}-2l}$ will be the corresponding leaf ephemeral seeds.} Now, we will find the secret key $\vect{e}$ using the Alg.~\ref{alg:recover-secret-cross}. The function \texttt{SeedTreeUpdate} works the same way we defined it in Section~\ref{sec:attackTemplate}.

\section{Simulation result}\label{section:simulation}

{In this section, we discuss the simulation procedure of our fault attack on LESS and CROSS signatures; $i.e.$, we apply our fault assumption inside the LESS and CROSS signature algorithms to imitate the corresponding practical attack scenario. The simulation code is available at GitHub \footnote{\label{link}\url{https://github.com/s-adhikary/zkfault_simulation}}.

In the previous Sections{~\ref{subsec:RecoveryColCalculation}} and {~\ref{sec:CROSS}}, we have analyzed the effect of modification of the values $\vecentry{x}{i}$ (for LESS) and $\vecentry{y}{i}$ (for CROSS) to $0$ and $1$ respectively. For LESS and CROSS, this can be achieved by stuck at zero/stuck at one or instruction skip fault. So, in the simulation code, we have assumed the values $0$ and $1$ of the nodes $\vecentry{x}{i}$ and $\vecentry{y}{i}$ respectively. After receiving this faulted signature $\tau$, we compute the corresponding secrets of CROSS (secret $\vect{e}$) and LESS (secret monomial matrices) with the help of the respective algorithms Alg.{~\ref{alg:RecoverSecret}} and Alg.{~\ref{alg:recover-secret-cross}}}.

{Note that the attack is valid if we target any $\vecentry{x}{i}$ ($\vecentry{y}{i}$) for fault injection in LESS (CROSS) signature, where $i$ is an arbitrary but fixed location. But in our simulation code we have fixed the location as $i=1$. One may change this location and the simulation code accordingly.  However, in that case the results in Table.~{\ref{tab:ExpectedCOFS}} would change according to our result in Section~{\ref{subsec:RecoveryColCalculation}}.}
{We provide the simulation results for all the versions of LESS and CROSS in Table{~\ref{tab:ExpectedCOFS}}. We have run the simulation code multiple times to recover all secrets with (multiple) faulted signatures. We take the average of the number of faulted signatures required to recover all secrets, which we denote with $N_{\text{avg}}$. We have also included the average number of secrets recovered from one single fault ($\Exp{X}$) in the table.}

\begin{table}[!ht]
\centering
\begin{tabular}{c||cccccc}
\hline
\bf Scheme ~&\begin{tabular}[x]{@{}c@{}}~\bf Security\\\bf Level\end{tabular}~&
\begin{tabular}[x]{@{}c@{}}\bf Parameter\\\bf Set\end{tabular}~&
\begin{tabular}[x]{@{}c@{}}\bf Optim.\\\bf Corner\end{tabular}~&
\begin{tabular}[x]{@{}c@{}}\bf Number\\\bf of Secrets\end{tabular}~&
$~\Exp{X}~$  &
\textbf{$N_{\text{avg}}$} \\ \hline
\multirow{7}{*}{LESS~\cite{LESS_is_More}} &\multirow{3}{*}{1}
                    & LESS-1b & -& 1 & 1 & 1\\
                  & & LESS-1i &-   & 3 & 2.91 & 1.05\\
                  & & LESS-1s & -&7  &5.55& 2.09\\ 
                   \cline{2-7}
&\multirow{2}{*}{3} & LESS-3b & -&1   &1& 1\\
                  & & LESS-3s &-& 2   &2&1\\
                    \cline{2-7}
&\multirow{2}{*}{5} & LESS-5b & -&1 &1 & 1\\
                  & & LESS-5s &-& 2 &2 &1\\
                    \hline
\multirow{2}{*}{CROSS~\cite{CROSS_Specification_Doc}}&\multirow{2}{*}{1,\ 3, \& 5} & CROSS-R-SDP& fast/small& 1&1&1\\
                  &  & CROSS-R-SDP(G)&fast/small& 1&1  &1\\\hline
\end{tabular}%
\caption{{Simulation result of full secret monomial matrices recovery of LESS and CROSS signature~{\cite{LESS_is_More,CROSS_Specification_Doc}}}}.
\label{tab:ExpectedCOFS}
\end{table}

{Our analysis is based on the fact that each time we query the faulted signature oracle, we get an effectively faulted signature. However, in a practical fault attack, this is not the case. In the real world, there is a probability that an injected fault is successful, say $p_1$, and also there is a probability that a successfully injected fault is effective, say $p_0$. Then}
\begin{align}
\begin{aligned}
	&\Pr[~\text{effective fault}~\wedge~\text{successful fault}~]\\
	=&\Pr[~\text{effective fault}~|~\text{successful fault}~]\cdot \Pr[~\text{successful fault}~]\\
	=& p_0p_1
\end{aligned}
\end{align}
{Let us consider $p=p_0p_1$. Therefore, in a practical scenario to get one faulted signature, the approximate number of queries to the faulted signature oracle needed would be $N_{\text{trial}}=\frac{1}{p}$. Moreover, to get $N_{\text{avg}}$ many faulted signatures, we need $N_{\text{total}}=\frac{N_{\text{avg}}}{p}$ many queries. For example if we consider $p=0.01$, then $N_{\text{trial}}=100$ and consequently, $N_{\text{total}}=100\cdot N_{\text{avg}}$.}

\section{Countermeasures}\label{sec:Countermeasure}
In the previous section, we have seen that the primary attack surface $\textit{Reference Tree}$ $\vect{x}$ is initialized by 0. If we inject fault to skip store instruction line-5 of Alg.~\ref{alg:ComputeSeedToPublish} i.e. $\vecentry{x}{i}=\vecentry{x}{2i+1}\wedge\vecentry{x}{2i+2}$, then $\vecentry{x}{i}$ does not change the value and stays 0. Hence, one may suggest initializing the \textit{Reference Tree} with all 1.
The instruction skip fault does not work in this case, but we can apply bit-flip fault or stuck-at-zero faults and apply the same attack analysis. Since many practical fault attacks are applicable, countermeasures against one type of fault may not serve our purpose. Therefore, first, we must identify the main reason for the existence of the attack vector.

After the digest computation in Alg.~\ref{alg:Signature}, the values of vector $\vect{d}$ are checked twice. First, by checking whether the value of each $\vecentry{d}{i}$ is zero or not, they published the component $\vect{TreeNode}$. Completing this procedure, again, each $\vecentry{d}{i}$ is checked to publish the component $\vect{rsp}$. Therefore, if an attacker injects a fault at the time of computing $\vect{TreeNode}$ and somehow succeeds in disclosing the seed $\vecentry{ESEED}{i}$ without altering the vector $\vect{d}$, then the information about the secret matrix $\mat{Q}_{\vecentry{d}{i}}^{T}$ is susceptible to leakage. To mitigate potential attacks, we must publish either the response $\widetilde{\mat{Q}}_{i}$ or $\mat{Q}_{\vecentry{d}{i}}^{T}\overline{\mat{Q}}_{i}$ after a single verification of the value $\vecentry{d}{i}$. 

In the following sections, we will offer concise explanations for two countermeasures incorporated within the LESS scheme that protect the scheme up to one fault.%
\subsection{Countermeasure with larger signature size}
The most straightforward countermeasure would be not using the tree construction at all. In this version, the preparation of digest $\vect{d}$ is the same as Alg.~\ref{alg:Signature}. After the digest preparation, for each $0\leq i<t$ we only check the value of $\vecentry{d}{i}$, and set 
$$\vecentry{rsp}{i}=\begin{cases}
    \widetilde{\mat{Q}}_i &\text{~~if~~}\vecentry{d}{i}=0\\
    \mat{Q}_{\vecentry{d}{i}}^T\overline{\mat{Q}}_i &\text{~~otherwise} 
\end{cases}\,.
$$
Then, we cannot get both $\widetilde{\mat{Q}}_{i}$ and $\mat{Q}_{\vecentry{d}{i}}^{T}\overline{\mat{Q}}_{i}$ for any $i$ and the attack can be prevented. However, in this case, the size of the signature will be $|\textit{cmt}|+wk(\lceil\log n\rceil+\lceil\log (q-1)\rceil)+(t-w)\lambda$. This signature size will be larger than the submitted version of LESS~\cite{LESS_Specification_Doc}. 
\subsection{Countermeasure with same small signature size}
\begin{figure}[b]
\centering
\includegraphics[width=.8\linewidth]{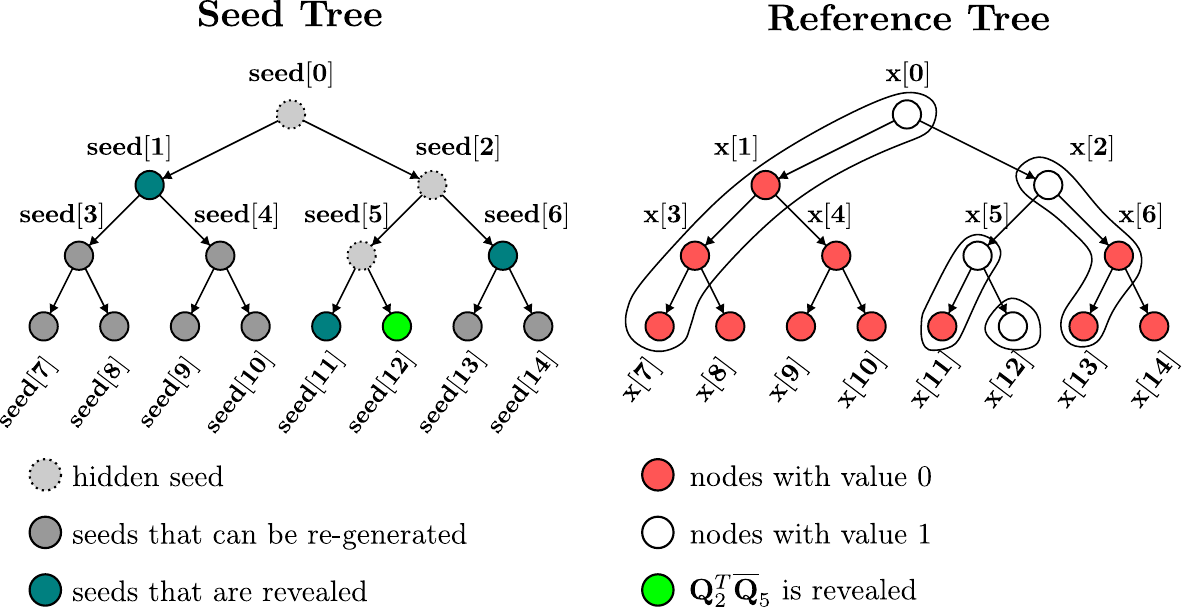}
\caption{Example for extraction of response $\vect{rsp}${$,\ \vect{TreeNode}$} according to Alg.~\ref{alg:Countermeasure2}} 
\label{fig:Counter2}  
\end{figure}
Here, we introduce another countermeasure that will keep the signature size the same as the submitted version of LESS~\cite{LESS_Specification_Doc}. In this countermeasure, our main target is that after computation of the \textit{Reference Tree} $\vect{x}$,
we check the \textit{Reference Tree} only once to compute the response for the signature. 
Note that, according to the construction of the \textit{Reference Tree} $\vect{x}$, the path from a leaf node to the root node should be of form $0^{y}1^{z}$ because if the value of any node on this path is $1$, then all the ancestors of that node will be $1$.

After the preparation of the \textit{Reference Tree} $\vect{x}$, we modify the signature generation method. First, observe that for any leaf node $\vecentry{x}{2l-1+i}$ of the \textit{Reference Tree}, let the path to the root be $\vecentry{x}{i_{0}}\vecentry{x}{i_{1}}\cdots\vecentry{x}{i_{p}}$, where $p=\lceil\log_2(l)\rceil+1$ is the height of the \textit{Reference Tree}. Here, $i_{0}=2l-1+i$ and $\vecentry{x}{i_{j}}$ is the ancestor of $\vecentry{x}{2l-1+i}$ at the height {$j$}, this means that {$\vecentry{x}{i_{p}}$ is the root}. The following is the signature generation process: 
\begin{itemize}
    \item  {\textbf{Step-1:}We start from the leftmost leaf node.}
    \item  {\textbf{Step-2:} check the path $\vecentry{x}{i_0}\vecentry{x}{i_1}\cdots\vecentry{x}{i_p}$}
    \item  {\textbf{Step-3:} If $\vecentry{x}{i_j}=1$ for $0\leq j\leq p$, then we store } {$\mat{Q}_{\vecentry{d}{i}}^{T}\overline{\mat{Q}}_{i}$ in $\vect{rsp}$} {and select the next leaf node as} {$\vecentry{x}{2l+i}$} {and goto \textbf{Step-2}, else go to \textbf{Step-4}.} 
    \item  {\textbf{Step-4:} We find $\vecentry{x}{i_h}$, which is the highest ancestor of $\vecentry{x}{i_0}$ with the value zero.}
    \item  {\textbf{Step-5:} Since $\vecentry{x}{i_{h}}=0$, all the $2^{h}$ leaf nodes of the subtree rooted $\vecentry{x}{i_{h}}$ must be zero. We store the seed $\vecentry{seed}{i_{h}}$ in $\vect{TreeNode}$ and we select the next leaf node as $\vecentry{x}{2l-1+i+2^{h}}$ and go to \textbf{Step-2}. If no more leaf nodes are left, then we stop.}
    \item {\textbf{Step-6:} return the pair $\vect{rsp}$ and $\vect{TreeNode}$.}
\end{itemize}

The digest $\vect{d}$ and the \textit{Reference Tree} are prepared in the same process here as it is prepared in Alg.~\ref{alg:Signature}. Only the vectors $\vect{rsp}$, $\vect{TreeNode}$ are prepared using Alg.~\ref{alg:Countermeasure2}. At the end, $(cmt,\ salt,\ \vect{rsp}$,\ {$, \vect{TreeNode})$} is generated as the signature. 

\begin{algorithm}
\caption{$\texttt{LESS\_Gen\_rsp\_update}$}\label{alg:Countermeasure2}
\begin{algorithmic}[1]
 \Require
    The fixed weight digest vector $\vect{d}$,
    The secret monomial matrices $\mat{Q}_{i},~\forall i\in\ring{Z}{s}$, where $\mat{Q}_{0}=\mat{I}_{n}$,
    The \textit{Seed Tree} $\vect{seed}$, 
    and the partial monomial matrices $\overline{\mat{Q}}_{j}$, $\forall j\in\ring{Z}{t}$.
 
 \Ensure The response $\vect{rsp}$ {and} {$\vect{TreeNode}$}
\For{$i=0;\ i<t;\ i=i+1$}
\If{$\vecentry{d}{i}=0$}
    \State $\vecentry{f}{i}=0$
\Else
\State $\vecentry{f}{i}=1$
\EndIf
\EndFor
\For{$i=0;\ i<4l;\ i=i+1$}
    \State $\vecentry{x}{i}=0$
\EndFor
\State $\vect{x}\xleftarrow{}\texttt{compute\_seeds\_to\_publish}(\vect{f},\ \vect{x})$

\State $i=0,$ {$j=0,\ j'=0$}
\While{$i < t$}
    \State $c=2l-1+i,~h=0,~h'=0$
    \While{{$Parent(c)\neq 0$}}
            \If{$\vecentry{x}{c}=0$}
            \State $c' = c$, {$h'=h+1$}
            \EndIf
            $c=Parent(c)$, {$h=h+1$}
        \EndWhile
        \If{{$h'=0$}}
        \State$\vecentry{rsp}{j'}=\texttt{CompressMono}(\mat{Q}_{\vecentry{d}{i}}^{T}\overline{\mat{Q}}_{i})$
        \State $i=i+1,\ j'=j'+1$
        \Else
        \State {$\vecentry{TreeNode}{j}=\vecentry{seed}{c'}$}
        \State $i=i+2^{h'-1},\ j=j+1$
        \EndIf
    \EndWhile
\State Return $\vect{rsp}${$,\ \vect{TreeNode}$}
\end{algorithmic}
\end{algorithm}

\begin{example}
   Given a fixed signature digest vector represented as $\vect{d}=(0,\ 0,\ 0,\ 0,\ 0,\ 2,\ 0,\ 0)$. First, we construct the \textit{Reference Tree} $\vect{x}$, which is illustrated in Fig.~\ref{fig:Counter2}. Begin by checking leaf nodes from the left side.
First, we take the leftmost leaf node $\vecentry{x}{7}$. The path from $\vecentry{x}{7}$ to root is {$\vecentry{x}{i_{0}}\vecentry{x}{i_{1}}\vecentry{x}{i_{2}}\vecentry{x}{i_{3}}$}=$\vecentry{x}{7}\vecentry{x}{3}\vecentry{x}{1}\vecentry{x}{0}$, and it is valued $0001$. In Fig.~\ref{fig:Counter2}, we can see that the height of the last ancestor valued $0$ is $h'=3$, and the node is $\vecentry{x}{1}$. 
We store $\vecentry{seed}{1}$ in response {$\vect{TreeNode}$} and select the next leaf node as {$\vecentry{x}{7+2^{h'-1}}=$} $\vecentry{x}{11}$. 
Final response will be calculated as {$\vect{rsp}=(\mat{Q}_{2}^{T}\overline{\mat{Q}}_{12})$} and {$\vect{TreeNode}=(\vecentry{seed}{1},\ \vecentry{seed}{11},\ \vecentry{seed}{6})$}.
\end{example}

Suppose we inject a fault at the node $\vecentry{x}{i}$ and alter its value from $1$ to $0$. Then some of its ancestors may change. Let $\vecentry{x}{i_1}$, $\vecentry{x}{i_2}$, $\cdots,$ $\vecentry{x}{i_h}$ be the list of all ancestors of $\vecentry{x}{i}$, where $\vecentry{x}{i_j}$ is ancestor of $\vecentry{x}{i_{j-1}}$ for all {$j\in[2,\ h]$} and $\vecentry{x}{i_h}$ is the root. Suppose $\vecentry{x}{i_y}$ is the highest ancestor in the list to have the value zero. Now, consider the leftmost leaf node $\vecentry{x}{r}$ of the subtree rooted at $\vecentry{x}{i_y}$, then $\vecentry{x}{i_y}$ is the highest node with value zero in the path from $\vecentry{x}{r}$ to root. Hence, according to Alg.~\ref{alg:Countermeasure2}, $\vecentry{seed}{i_y}$ is appended to {$\vect{TreeNode}$} and all the leaf nodes in the subtree rooted at $\vecentry{x}{i_y}$ are skipped.

Observe that the fault at $\vecentry{x}{i}$ only affects the subtree rooted at $\vecentry{x}{i_y}$, the rest of the \textit{Reference Tree} is unchanged. The subtree rooted at $\vecentry{x}{i_y}$ is skipped after revealing $\vecentry{seed}{i_y}$, and $\vecentry{seed}{i_y}$ can only be used to generate the ephemeral seeds that do not have any information about the secret monomial matrices. Therefore, the attack will not be possible with just one fault.

We only change the attack surface part to protect the LESS scheme against our attack. The attack surface of the CROSS signature scheme is similar to LESS. {We can use the proposed countermeasure for CROSS also. We only need to modify the update method of $\vect{rsp}$ and $\vect{TreeNode}$ according to the CROSS signing algorithm.}

\subsubsection{{Cost of the countermeasure }}
{Here we will compare the cost analysis of our proposed countermeasure with the original LESS implementation (Alg.~{\ref{alg:Signature}}). The \textit{Reference Tree} generation process in our proposed method is the same as the original LESS proposal. We have only changed the $\vect{TreeNode}$ and $\vect{rsp}$ generation process but result is same in both cases $i.e$, for a particular \textit{Seed Tree}, \textit{Reference Tree} pair, our method and original LESS implementation, both generate the same $\vect{TreeNode}$ and $\vect{rsp}$.
}
{
First of all, we consider the following computation costs:}
\begin{itemize}
    \item {$C_{\text{check}}$: cost of any condition checking}
    \item {$C_{\text{mono}}$: cost of computation of a monomial multiplication followed by a \texttt{CompressMono} function and storing the result}
    \item {$C_{\text{seed}}$: cost of storing seeds from \textit{Seed Tree} condition checking}
\end{itemize}
{Also, we fix a \textit{Seed Tree} and a \textit{Reference Tree} and we assume that $r$ many seeds from \textit{Seed Tree} are to be stored in $\vect{TreeNode}$. Assume that the total number of nodes in the \textit{Reference Tree} is $N$.}
\paragraph{Cost of LESS original implementation (Alg.~{\ref{alg:Signature}}):} 
{
As we can see in Alg.~{\ref{alg:Signature}}, the $\vect{TreeNode}$ is generated using Alg.~{\ref{alg:SeedTreePaths}}. We are going to ignore the cost of \texttt{compute\_seeds\_to\_publish} function (Alg.~{\ref{alg:ComputeSeedToPublish}}), as it has also been used in our countermeasure. For each node in the \textit{Reference Tree}, Alg.~{\ref{alg:SeedTreePaths}} checks the node and its parent node which takes $2N\cdot C_{\text{check}}$ computations. As we have assumed earlier there are $r$ many seeds which are to be stored in $\vect{TreeNode}$, which takes $r\cdot C_{\text{seed}}$ computations. After that, Alg.~{\ref{alg:Signature}} checks each value of the vector $\vect{d}$ which takes $t\cdot C_{check}$ computations and computes the monomial multiplication and calls the \texttt{CompressMono} function for each $\vecentry{d}{i}\neq 0$, which takes $w\cdot C_{\text{mono}}$ computations. Therefore the total cost is $(2N + t)\cdot C_{\text{check}} + r\cdot C_{\text{seed}} +w\cdot C_{\text{mono}}$.
}
\paragraph{Cost of our proposed method (Alg.~{\ref{alg:Countermeasure2}}):}
{
In each iteration of the while loop, we first check the full path from the leaf node to the root, this takes $\log_2{N}\cdot C_{\text{check}}$ computations. In each iteration, the algorithm can either update $\vect{TreeNode}$ or update $\vect{rsp}$ $i.e.$, the total number of iterations in the while loop is $(r+w)$. The condition checking takes total $(r+w)\cdot \log_2{N}\cdot C_{\text{check}}$ computations. Now the $\vect{rsp}$ is updated for $w$ many iterations, which takes total $w\cdot C_{\text{mono}}$ computations. Similarly, updating $\vect{TreeNode}$ takes $r\cdot C_{\text{seed}}$ computations. Therefore, the total cost is $(r+w)\cdot\log_2{N}\cdot C_{\text{check}} + r\cdot C_{\text{seed}} + w\cdot C_{\text{mono}}$.\\\\
}

{Observe that the cost of the countermeasure may vary with the value of $r$, which is the number of seeds published from \textit{Seed Tree}. We have benchmarked the performance of the LESS-sign algorithm with and without our countermeasure for all parameter sets of LESS in Table~{\ref{tab:benchmark}}. As we can see, including our countermeasure does not degrade the performance of the LESS-sign algorithm.
}
\begin{table}[!ht]
\centering
\begin{tabular}{cccc}
\hline
\multirow{2}{*}{\begin{tabular}[x]{@{}c@{}}~\bf Security\\\bf Level\end{tabular}~}&
\multirow{2}{*}{\begin{tabular}[x]{@{}c@{}}\bf Parameter\\\bf Set\end{tabular}~}&\multicolumn{2}{c}{\begin{tabular}[x]{@{}c@{}}\bf Average cpucycles\\\bf ($\times 10^{6}$ cycles)\end{tabular}} \\
\cline{3-4}
&&\begin{tabular}[x]{@{}c@{}}\bf Our\\\bf Countermeasure\end{tabular}~&
\begin{tabular}[x]{@{}c@{}}\bf Original\\\bf LESS\end{tabular}\\ \hline
\multirow{3}{*}{1}
                    & LESS-1b & 1162.31& 1162.19 \\
                   & LESS-1i &1148.03  & 1147.94 \\
                   & LESS-1s & 931.57&931.72\\ 
                   \hline
\multirow{2}{*}{3} & LESS-3b & 9563.01&9564.23 \\
                   & LESS-3s &11285.14& 11283.65 \\
                    \hline
\multirow{2}{*}{5} & LESS-5b & 44031.11&44036.56\\
                   & LESS-5s &29544.22& 29542.31\\
                    \hline
\end{tabular}%
\caption{{LESS-sign performance comparison with our countermeasure against original LESS implementation}}.
\label{tab:benchmark}
\end{table}

{
For benchmarking, we have used an HP Elite Tower 600 G9 Desktop with an Intel Core i7-12700 CPU running at 2.1 GHz and 32 GB physical memory, which was running Ubuntu 22.04.4 LTS. The test codes were executed on a single core with Turbo Boost and hyperthreading disabled. 
}

\section{Discussion and future direction}\label{sec:DiscussionFutureWork}
In this study, we have assumed a single-fault model where an attacker can only inject a fault in one single location. The countermeasure we have provided is based on that assumption. We emphasize the necessity of future investigations into higher-order fault models, side-channel attacks using power, electromagnetic radiation~\cite{DBLP:journals/iacr/KunduCSKMV23,Fault_Attacks_on_CCA-secure_Lattice_KEMs}, and combined (side-channel assisted fault attack) attack. This study is the first research study enhancing the security of the digital signature scheme LESS and CROSS against a broader spectrum of fault attacks. {LESS has $(s-1)$ secret monomial matrices, and we've shown that one pair can recover some information about one secret matrix. So, we need multiple targeted pairs to retrieve all secret matrices. This number of required pairs depends on various parameters. Therefore, we require more than one effective faulted signature for some parameter sets of LESS. For CROSS, there’s only one secret $\vect{e}$ for all the parameter sets. It can be recovered with just one targeted pair.
}

{In this work, we have done a fault analysis of the LESS{~\cite{LESS_Specification_Doc}} signature scheme that has been submitted to NIST.} However, the authors of LESS have updated the scheme in the LESS project's site~\cite{LESSProjectSite}. We observe that our mentioned attack surface, $i.e.$, the computation of $\vect{TreeNode}$ by using the function \texttt{SeedTreePaths}, are present there too. So, our attack is still applicable to their updated version. Another code-based signature scheme MEDS (Matrix Equivalence Digital Signature)~\cite{Take_your_MEDS:_Digital_Signatures_from_Matrix_Code_Equivalence} based on the zero-knowledge protocol. {Like LESS and CROSS, the Sign algorithm of MEDS uses a similar tree construction to reduce the signature size. In this case, the response $(\tilde{\mat{A}}_{i}$ or $\mat{Q}\cdot\tilde{\mat{A}}_{i})$ is constructed depending on some fixed weight digest vector $\vect{d}$, where $\mat{Q}$ is a secret component. It involves the same seed tree and \textit{Reference Tree} to store some seeds corresponding to the response $\tilde{\mat{A}}_{i}$ in a similar manner. So, the same attack model can also be applied to the MEDS signature scheme. However, we have not completely analyzed how many faulted signatures are needed to find the entire secret. We left this part for future work.} 

We have shown a fault detection method where we have fixed a position $\vecentry{x}{i}$ of the \texttt{Reference Tree} and injected fault at that location. The detection method in Section~\ref{sec:Faultdetection} can detect a successful and effective fault at the location $\vecentry{x}{i}$ for any chosen $i$, where $1\leq i< 4l-1$. Moreover, this method can determine the occurrence of an effective fault at any arbitrary location within the reference tree by applying the detection procedure for each $1\leq i< 4l-1$. Given that $l\in\{128,\ 512,\ 1024\}$ (according to Table~\ref{tab:parameters}), this approach is computationally feasible. However, a mathematical analysis for this scenario has not been included and is left for future work.
\section*{Acknowledgements}
This work was partially supported by Horizon 2020 ERC Advanced Grant (101020005 Belfort), CyberSecurity Research Flanders with reference number VR20192203, BE QCI: Belgian-QCI (3E230370) (see beqci.eu), Intel Corporation, Secure Implementation of Post-Quantum Cryptosystems (SECPQC) DST-India and BELSPO. 
Angshuman Karmakar is funded by FWO (Research Foundation – Flanders) as a junior post-doctoral fellow (contract number 203056 / 1241722N LV). Puja Mondal is supported by C3iHub, IIT Kanpur. 

%
%
%
\newpage
 \bibliographystyle{splncs04}
 \bibliography{mybibliography}

\begin{thebibliography}{10}
\providecommand{\url}[1]{\texttt{#1}}
\providecommand{\urlprefix}{URL }
\providecommand{\doi}[1]{https://doi.org/#1}

\bibitem{Identification_signatures_FiatShamir_transform}
Abdalla, M., An, J.H., Bellare, M., Namprempre, C.: {From Identification to Signatures via the Fiat-Shamir Transform: Minimizing Assumptions for Security and Forward-Security}. In: Knudsen, L.R. (ed.) Advances in Cryptology - {EUROCRYPT} 2002, International Conference on the Theory and Applications of Cryptographic Techniques, Amsterdam, The Netherlands, April 28 - May 2, 2002, Proceedings. Lecture Notes in Computer Science, vol.~2332, pp. 418--433. Springer (2002). \doi{10.1007/3-540-46035-7\_28}, \url{https://doi.org/10.1007/3-540-46035-7\_28}

\bibitem{nist_final_report}
Alagic, G., Apon, D., Cooper, D., Dang, Q., Dang, T., Kelsey, J., Lichtinger, J., Liu, Y.K., Miller, C., Moody, D., Peralta, R., Perlner, R., Robinson, A., Smith-Tone, D.: {Status Report on the Third Round of the NIST Post-Quantum Cryptography Standardization Process}. Online. Accessed 26th January, 2024 (2022), \url{https://nvlpubs.nist.gov/nistpubs/ir/2022/NIST.IR.8413-upd1.pdf}

\bibitem{kyber_specification}
Avanzi, R., Bos, J., Ducas, L., Kiltz, E., Lepoint, T., Lyubashevsky, V., Schanck, J.M., Schwabe, P., Seiler, G., Stehle, D.: {CRYSTALS-Kyber Algorithm Specifications And Supporting Documentation (version 3.02)}. Online (2021), \url{https://pq-crystals.org/kyber/data/kyber-specification-round3-20210804.pdf}

\bibitem{LESS_Hardware}
Beckwith, L., Wallace, R., Mohajerani, K., Gaj, K.: {A High-Performance Hardware Implementation of the {LESS} Digital Signature Scheme}. In: Johansson, T., Smith{-}Tone, D. (eds.) Post-Quantum Cryptography - 14th International Workshop, PQCrypto 2023, College Park, MD, USA, August 16-18, 2023, Proceedings. Lecture Notes in Computer Science, vol. 14154, pp. 57--90. Springer (2023). \doi{10.1007/978-3-031-40003-2\_3}, \url{https://doi.org/10.1007/978-3-031-40003-2\_3}

\bibitem{The_SPHINCS_Signature_Framework}
Bernstein, D.J., H\"{u}lsing, A., K\"{o}lbl, S., Niederhagen, R., Rijneveld, J., Schwabe, P.: {The SPHINCS+ Signature Framework}. In: Proceedings of the 2019 ACM SIGSAC Conference on Computer and Communications Security. p. 2129–2146. CCS '19, Association for Computing Machinery, New York, NY, USA (2019). \doi{10.1145/3319535.3363229}, \url{https://doi.org/10.1145/3319535.3363229}

\bibitem{beullens_rainbow}
Beullens, W.: {Breaking Rainbow Takes a Weekend on a Laptop}. Cryptology ePrint Archive, Paper 2022/214 (2022), \url{https://eprint.iacr.org/2022/214}, \url{https://eprint.iacr.org/2022/214}

\bibitem{LESS_is_More}
Biasse, J.F., Micheli, G., Persichetti, E., Santini, P.: {LESS is More: Code-Based Signatures Without Syndromes}. In: Nitaj, A., Youssef, A. (eds.) Progress in Cryptology - AFRICACRYPT 2020. pp. 45--65. Springer International Publishing, Cham (2020)

\bibitem{cryptoeprint:2022301}
Breier, J., Hou, X.: {How Practical are Fault Injection Attacks, Really?} Cryptology ePrint Archive, Paper 2022/301 (2022), \url{https://eprint.iacr.org/2022/301}, \url{https://eprint.iacr.org/2022/301}

\bibitem{DBLP:journals/tches/BruinderinkP18}
Bruinderink, L.G., Pessl, P.: {Differential Fault Attacks on Deterministic Lattice Signatures}. {IACR} Trans. Cryptogr. Hardw. Embed. Syst.  \textbf{2018}(3),  21--43 (2018). \doi{10.13154/TCHES.V2018.I3.21-43}, \url{https://doi.org/10.13154/tches.v2018.i3.21-43}

\bibitem{An_Efficient_Key_Recovery_Attack_on_SIDH}
Castryck, W., Decru, T.: {An Efficient Key Recovery Attack on SIDH}. In: Advances in Cryptology – EUROCRYPT 2023: 42nd Annual International Conference on the Theory and Applications of Cryptographic Techniques, Lyon, France, April 23-27, 2023, Proceedings, Part V. p. 423–447. Springer-Verlag, Berlin, Heidelberg (2023). \doi{10.1007/978-3-031-30589-4_15}, \url{https://doi.org/10.1007/978-3-031-30589-4_15}

\bibitem{Enhanced_pqsigRM:Code-Based_Digital_Signature_Scheme_with_Short_Signature_and_Fast_Verification_for_Post-Quantum_Cryptography}
Cho, J., No, J.S., Lee, Y., Koo, Z., Kim, Y.S.: {Enhanced pqsigRM: Code-Based Digital Signature Scheme with Short Signature and Fast Verification for Post-Quantum Cryptography}. Cryptology ePrint Archive, Paper 2022/1493 (2022), \url{https://eprint.iacr.org/2022/1493}, \url{https://eprint.iacr.org/2022/1493}

\bibitem{Take_your_MEDS:_Digital_Signatures_from_Matrix_Code_Equivalence}
Chou, T., Niederhagen, R., Persichetti, E., Randrianarisoa, T.H., Reijnders, K., Samardjiska, S., Trimoska, M.: {Take your MEDS: Digital Signatures from Matrix Code Equivalence}. Cryptology ePrint Archive, Paper 2022/1559 (2022), \url{https://eprint.iacr.org/2022/1559}, \url{https://eprint.iacr.org/2022/1559}

\bibitem{Secret_External_Encodings_Do_Not_Prevent_Transient_Fault_Analysis}
Clavier, C.: {Secret External Encodings Do Not Prevent Transient Fault Analysis}. In: Paillier, P., Verbauwhede, I. (eds.) Cryptographic Hardware and Embedded Systems - {CHES} 2007, 9th International Workshop, Vienna, Austria, September 10-13, 2007, Proceedings. Lecture Notes in Computer Science, vol.~4727, pp. 181--194. Springer (2007). \doi{10.1007/978-3-540-74735-2\_13}, \url{https://doi.org/10.1007/978-3-540-74735-2\_13}

\bibitem{rainbow}
Ding, J., Schmidt, D.: {Rainbow, a New Multivariable Polynomial Signature Scheme}. In: Ioannidis, J., Keromytis, A., Yung, M. (eds.) Applied Cryptography and Network Security. pp. 164--175. Springer Berlin Heidelberg, Berlin, Heidelberg (2005)

\bibitem{CRYSTALS_Dilithium_Digital_Signatures_from_Module_Lattices}
Ducas, L., Lepoint, T., Lyubashevsky, V., Schwabe, P., Seiler, G., Stehle, D.: {CRYSTALS -- Dilithium: Digital Signatures from Module Lattices}. Cryptology ePrint Archive, Paper 2017/633 (2017), \url{https://eprint.iacr.org/2017/633}, \url{https://eprint.iacr.org/2017/633}

\bibitem{Identification_Protocols}
Galbraith, S.D., Petit, C., Silva, J.: {Identification Protocols and Signature Schemes Based on Supersingular Isogeny Problems}. In: Takagi, T., Peyrin, T. (eds.) Advances in Cryptology - {ASIACRYPT} 2017 - 23rd International Conference on the Theory and Applications of Cryptology and Information Security, Hong Kong, China, December 3-7, 2017, Proceedings, Part {I}. Lecture Notes in Computer Science, vol. 10624, pp. 3--33. Springer (2017). \doi{10.1007/978-3-319-70694-8\_1}, \url{https://doi.org/10.1007/978-3-319-70694-8\_1}

\bibitem{DBLP:journals/iacr/GenetKPM18}
Gen{\^{e}}t, A., Kannwischer, M.J., Pelletier, H., McLauchlan, A.: {Practical Fault Injection Attacks on {SPHINCS}}. {IACR} Cryptol. ePrint Arch. p.~674 (2018), \url{https://eprint.iacr.org/2018/674}

\bibitem{flush_gauss_bliss}
Groot~Bruinderink, L., H{\"u}lsing, A., Lange, T., Yarom, Y.: {{Flush, Gauss, and Reload -- A Cache Attack on the BLISS Lattice-Based Signature Scheme}}. In: Gierlichs, B., Poschmann, A.Y. (eds.) Cryptographic Hardware and Embedded Systems -- CHES 2016. pp. 323--345. Springer Berlin Heidelberg, Berlin, Heidelberg (2016)

\bibitem{Towards_quantum_resistant_cryptosystems_from_supersingular_elliptic_curve_isogenies}
Jao, D., Feo, L.D.: {Towards Quantum-Resistant Cryptosystems from Supersingular Elliptic Curve Isogenies}. In: PQCrypto. Lecture Notes in Computer Science, vol.~7071, pp. 19--34. Springer (2011)

\bibitem{DBLP:journals/iacr/KunduCSKMV23}
Kundu, S., Chowdhury, S., Saha, S., Karmakar, A., Mukhopadhyay, D., Verbauwhede, I.: {Carry Your Fault: {A} Fault Propagation Attack on Side-Channel Protected LWE-based {KEM}}. {IACR} Cryptol. ePrint Arch. p.~1674 (2023), \url{https://eprint.iacr.org/2023/1674}

\bibitem{LESSProjectSite}
LESSProjectSite: {LESS project} (2023), \url{https://www.less-project.com/}

\bibitem{meyer2000matrix}
Meyer, C.: {Matrix Analysis and Applied Linear Algebra}. Other Titles in Applied Mathematics, Society for Industrial and Applied Mathematics (2000), \url{https://books.google.co.in/books?id=HoNgdpJWnWMC}

\bibitem{ECC_miller_Crypto86}
Miller, V.S.: {Use of Elliptic Curves in Cryptography}. In: Williams, H.C. (ed.) Advances in Cryptology --- CRYPTO '85 Proceedings. pp. 417--426. Springer Berlin Heidelberg, Berlin, Heidelberg (1986)

\bibitem{A_practical_key-recovery_attack_on_LWE-based_key-encapsulation_mechanism_schemes_using_Rowhammer}
Mondal, P., Kundu, S., Bhattacharya, S., Karmakar, A., Verbauwhede, I.: {A practical key-recovery attack on LWE-based key-encapsulation mechanism schemes using Rowhammer}. CoRR  \textbf{abs/2311.08027} (2023). \doi{10.48550/ARXIV.2311.08027}, \url{https://doi.org/10.48550/arXiv.2311.08027}

\bibitem{nist_additional_call}
NIST: {NIST Announces Additional Digital Signature Candidates for the PQC Standardization Process}. Online. Accessed 26th January, 2024 (2023), \url{https://csrc.nist.gov/news/2023/additional-pqc-digital-signature-candidates}

\bibitem{Practical_CCA2-Secure_and_Masked_Ring-LWE_Implementation}
Oder, T., Schneider, T., P{\"{o}}ppelmann, T., G{\"{u}}neysu, T.: {Practical CCA2-Secure and Masked Ring-LWE Implementation}. {IACR} Trans. Cryptogr. Hardw. Embed. Syst.  \textbf{2018}(1),  142--174 (2018). \doi{10.13154/TCHES.V2018.I1.142-174}, \url{https://doi.org/10.13154/tches.v2018.i1.142-174}

\bibitem{A_New_Formulation_of_the_Linear_Equivalence_Problem}
Persichetti, E., Santini, P.: {A New Formulation of the Linear Equivalence Problem and Shorter LESS Signatures}, pp. 351--378 (12 2023). \doi{10.1007/978-981-99-8739-9_12}

\bibitem{Fault_Attacks_on_CCA-secure_Lattice_KEMs}
Pessl, P., Prokop, L.: {Fault Attacks on CCA-secure Lattice KEMs}. {IACR} Trans. Cryptogr. Hardw. Embed. Syst.  \textbf{2021}(2),  37--60 (2021). \doi{10.46586/TCHES.V2021.I2.37-60}, \url{https://doi.org/10.46586/tches.v2021.i2.37-60}

\bibitem{cryptoeprint:2017/1014}
Poddebniak, D., Somorovsky, J., Schinzel, S., Lochter, M., Rösler, P.: {Attacking Deterministic Signature Schemes using Fault Attacks}. Cryptology ePrint Archive, Paper 2017/1014 (2017), \url{https://eprint.iacr.org/2017/1014}, \url{https://eprint.iacr.org/2017/1014}

\bibitem{Proos_Zalka_2003}
Proos, J., Zalka, C.: {Shor's discrete logarithm quantum algorithm for elliptic curves}. Quantum Inf. Comput.  \textbf{3}(4),  317--344 (2003). \doi{10.26421/QIC3.4-3}, \url{https://doi.org/10.26421/QIC3.4-3}

\bibitem{A_new_approach_for_rowhammer_attacks}
Qiao, R., Seaborn, M.: {A new approach for rowhammer attacks}. In: 2016 IEEE International Symposium on Hardware Oriented Security and Trust (HOST). pp. 161--166 (2016). \doi{10.1109/HST.2016.7495576}

\bibitem{On_stirling_numbers_of_the_second_kind}
Rennie, B., Dobson, A.: On stirling numbers of the second kind. Journal of Combinatorial Theory  \textbf{7}(2),  116--121 (1969). \doi{https://doi.org/10.1016/S0021-9800(69)80045-1}, \url{https://www.sciencedirect.com/science/article/pii/S0021980069800451}

\bibitem{RSA}
Rivest, R.L., Shamir, A., Adleman, L.M.: {A Method for Obtaining Digital Signatures and Public-Key Cryptosystems}. Commun. {ACM}  \textbf{21}(2),  120--126 (1978). \doi{10.1145/359340.359342}, \url{http://doi.acm.org/10.1145/359340.359342}

\bibitem{CROSS_Specification_Doc}
Schemes, N.P.Q.C.D.S.: {CROSS: Codes and Restricted Objects Signature Scheme - Specification Document} (Jan 2022), \url{https://csrc.nist.gov/csrc/media/Projects/pqc-dig-sig/documents/round-1/spec-files/CROSS-spec-web.pdf}

\bibitem{LESS_Specification_Doc}
Schemes, N.P.Q.C.D.S.: {Less: Linear equivalence signature scheme - Specification Document} (Jan 2022), \url{https://csrc.nist.gov/csrc/media/Projects/pqc-dig-sig/documents/round-1/spec-files/less-spec-web.pdf}

\bibitem{WAVE_Specification_Doc}
Schemes, N.P.Q.C.D.S.: {WAVE: Round 1 Submission - Specification Document} (Jan 2022), \url{https://csrc.nist.gov/csrc/media/Projects/pqc-dig-sig/documents/round-1/spec-files/wave-spec-web.pdf}

\bibitem{Shor_1994}
Shor, P.W.: {Algorithms for Quantum Computation: Discrete Logarithms and Factoring}. In: 35th Annual Symposium on Foundations of Computer Science, Santa Fe, New Mexico, USA, 20-22 November 1994. pp. 124--134. {IEEE} Computer Society (1994). \doi{10.1109/SFCS.1994.365700}, \url{https://doi.org/10.1109/SFCS.1994.365700}

\bibitem{fault_rsa}
Sullivan, G.A., Sippe, J., Heninger, N., Wustrow, E.: {Open to a fault: On the passive compromise of {TLS} keys via transient errors}. In: 31st USENIX Security Symposium (USENIX Security 22). pp. 233--250. USENIX Association, Boston, MA (Aug 2022), \url{https://www.usenix.org/conference/usenixsecurity22/presentation/sullivan}

\bibitem{Fault-Injection_Attacks_Against_NISTs_Post-Quantum_Cryptography_Round_3_KEM_Candidates}
Xagawa, K., Ito, A., Ueno, R., Takahashi, J., Homma, N.: {Fault-Injection Attacks Against NIST's Post-Quantum Cryptography Round 3 KEM Candidates}. In: Tibouchi, M., Wang, H. (eds.) Advances in Cryptology -- ASIACRYPT 2021. pp. 33--61. Springer International Publishing, Cham (2021)

\end{thebibliography}
\newpage
\appendix
\section*{Supplementary material}
\section{Comparison of LESS with other code-based signature schemes}\label{appendix:comparison}
Table~\ref{tab:comparison_of_sizes} compares the key sizes and performance of LESS with other code-based digital signature schemes submitted to NIST's additional call for digital signatures~\cite{nist_additional_call}.
\begin{table}[!ht]
\resizebox{.75\columnwidth}{!}{%
\begin{tabular}{clllll}
\hline
\multirow{2}{*}{\textbf{Category}} &
  \multirow{2}{*}{\textbf{Scheme}} &
  \multicolumn{2}{c}{\textbf{\begin{tabular}[c]{@{}c@{}}Performance\\ (M Cycle)\end{tabular}}} &
  \multicolumn{2}{c}{\textbf{\begin{tabular}[c]{@{}c@{}}Size\\ (Bytes)\end{tabular}}} \\ \cline{3-6} 
                           &       & Sign.   & Verify  & Signature   & \begin{tabular}[c]{@{}c@{}}Public\\ Key\end{tabular} \\ \hline
\multirow{4}{*}{Level I}   & WAVE~\cite{WAVE_Specification_Doc}  & 1161    & 205.9   & 822     & 3677390                                              \\
                           & MEDS~\cite{Take_your_MEDS:_Digital_Signatures_from_Matrix_Code_Equivalence}  & 518.1   & 515.6   & 9896    & 9923                                                 \\
                           & CROSS~\cite{CROSS_Specification_Doc} & 22      & 10.3    & 10304   & 61                                                   \\
                           & LESS~\cite{LESS_Specification_Doc}  & 263.6   & 271.4   & {5325}  & {98202}                                              \\ \hline
\multirow{4}{*}{Level III} & WAVE~\cite{WAVE_Specification_Doc}  & 3507    & 464.1   & 1249    & 7867598                                              \\
                           & MEDS~\cite{Take_your_MEDS:_Digital_Signatures_from_Matrix_Code_Equivalence}  & 1467    & 1462    & 41080   & 41711                                                \\
                           & CROSS~\cite{CROSS_Specification_Doc} & 46.5    & 18.3    & 23407   & 91                                                   \\
                           & LESS~\cite{LESS_Specification_Doc}  & 2446.9  & 2521.4  & {14438} & {70554}                                              \\ \hline
\multirow{4}{*}{Level V}   & WAVE~\cite{WAVE_Specification_Doc}  & 7397    & 813.3   & 1644    & 13632308                                             \\
                           & MEDS~\cite{Take_your_MEDS:_Digital_Signatures_from_Matrix_Code_Equivalence}  & 1629.8  & 1612.6  & 132528  & 134180                                               \\
                           & CROSS~\cite{CROSS_Specification_Doc} & 74.8    & 26.1    & 43373   & 121                                                  \\
                           & LESS~\cite{LESS_Specification_Doc}  & 10212.6 & 10458.8 & {26726} & 132096                                               \\ \hline
\end{tabular}%
}
\caption{Comparison of code-based signature schemes in terms of performance and size.}
\label{tab:comparison_of_sizes}
\end{table}
\section{Verification algorithm of LESS}\label{sec:LESS_verification}
The verification algorithm in Alg.~\ref{alg:vrfy} takes a message $\textit{m}$ and signature $\tau$ and the public key $\text{PK}$ as inputs and returns \texttt{1}, if the $\tau$ is a valid signature of the message $m$ otherwise, it will return \texttt{0}. 


\begin{algorithm}[H]
\caption{\texttt{LESS\_Vrfy}(${m},\ \tau,\ \text{PK}$)}\label{alg:vrfy}
\begin{algorithmic}[1]
\Require A message $\textit{m}$, the public key \text{PK} and the signature $\tau=(\textit{salt},\ \textit{cmt},\ \vect{TreeNode},\ \vect{rsp})$.
\Ensure It will return $\texttt{1}$, if $(\textit{m},\ \tau)$ is a valid message signature pair. Otherwise, it will return $\texttt{0}$.
\State $\vect{d'}\leftarrow \texttt{CSPRNG}(\textit{cmt},\ \mathbb{S}_{t,w})$
\For{$i=0;\ i<t;\ i=i+1$}
\If{$\vecentry{d'}{i}= 0$}
    \State $\vecentry{f'}{i}=0$ 
\Else
    \State $\vecentry{f'}{i}=1$
\EndIf
\EndFor
\State $ \vect{ESEED'}\leftarrow \texttt{regenerate\_leaves}(\textit{salt},\ \vect{TreeNode},\ \vect{f'})$
\State $\mat{G}_{0}\leftarrow \texttt{CSPRNG}(\textit{gseed},\ \mathbb{S}_\texttt{RREF})$
\State $k=0$
\For{$i=1;\ i<t;\ i=i+1$}
    \If{$\mathbf{d'}[i]=0$}
        \State $\widetilde{\mat{Q}}'_{i}\xleftarrow{}\texttt{CSPRNG}(\vecentry{ESEED'}{i},\ \mat{M}_{n})$
        \State $(\overline{\mat{Q}}'_{i},\ \overline{\mat{V}}'_{i})\xleftarrow{} \texttt{PreparedDigestInput}(\mat{G}_{0},\ \widetilde{\mat{Q}}'_{i})$ 
    \Else
        \State $j=\vecentry{d'}{i}$
        \State $\mat{G}_{j}\leftarrow \texttt{ExpandRREF}(\text{PK}[j])$
        \State $\mat{Q}^*\leftarrow \texttt{ExpandToMonomAction}(\vecentry{rsp}{k})$
        \State Compute $J\leftarrow \left\{\alpha_{i}:\mat{Q}^*[\alpha_{i}, *]=0\right\}$
        \State $\widehat{\mat{G}}\leftarrow \left(\mat{G}_{j}\mat{Q}^*~|~ \mat{G}_{j}[*,\ J]\right)$
        
        \State ($\Hatt{\mat{G}},\ pivot\_column)\leftarrow \texttt{RREF}(\Hatt{\mat{G}})$
        \State $\textit{NP}=0,\ \overline{\mat{V}}'_{i}=\mat{O}$
        \For{ $c=0;\ c<n;\ c=c+1$} 
            \If{$pivot\_column[c]=0$} 
                \State $\overline{\mat{V}}'_{i}\leftarrow \texttt{LexMin}(\widehat{G},\ \overline{\mat{V}}'_{i},\ \textit{NP},\ c)$
               \State $\textit{NP}=\textit{NP}+1$
            \EndIf
        \EndFor
        \State $\overline{\mat{V}}'_{i}\leftarrow\texttt{LexSortCol}(\overline{\mat{V}}'_{i})$
        \State $k=k+1$
    \EndIf

\EndFor
\State $\textit{cmt}'\leftarrow \texttt{H}(\overline{\mat{V}}'_{0},\ \cdots,\ \overline{\mat{V}}'_{t-1},\ \textit{m},\ \textit{len},\ \textit{salt})$
\If{$\textit{cmt}=\textit{cmt}'$}
\State Return $\texttt{1}$
\Else
\State Return $\texttt{0}$
\EndIf
\end{algorithmic}
\end{algorithm}

\end{document}